\documentclass[10pt, a4paper, oneside]{article}
\usepackage[italian,english]{babel}
\usepackage[a4paper,top=30mm,bottom=40mm,inner=30mm,outer=20mm]{geometry}
\usepackage{listings}
\usepackage{color}
\usepackage{setspace}
\usepackage{hyperref}
\usepackage{acro}
\usepackage{amsmath}
\usepackage{amsthm}
\usepackage{amssymb}
\usepackage{mathtools}
\usepackage{graphicx}
\usepackage{geometry}
\usepackage[font=footnotesize,labelfont=bf]{caption}
\usepackage[font=footnotesize]{subcaption}
\usepackage{bbm} % funzione indicatrice
\usepackage{bm} %v bold greek letters
\usepackage{mdframed} % boxed theorems ecc
\usepackage{comment}
\usepackage{siunitx}
\usepackage{hyperref}
\usepackage{xurl} % spezza in automatico url troppo lunghi su più righe
\usepackage{tikz}
\usetikzlibrary{arrows}
\usetikzlibrary{decorations.pathreplacing}
\setlength{\marginparwidth}{2cm} % per non dare problemi alle todonotes
\usepackage{todonotes}
\usepackage{enumitem}
\usepackage{array}
\usepackage{xcolor}

\usepackage{cancel}
\usepackage{tikz}
\usepackage{color}
\usetikzlibrary{calc,trees,positioning,arrows,chains,shapes.geometric,%
    decorations.pathreplacing,decorations.pathmorphing,shapes,%
    matrix,shapes.symbols}
\geometry{top=1.3in,bottom=1.3in}
\hypersetup{
    colorlinks,
    citecolor=black,
    filecolor=black,
    linkcolor=black,
    urlcolor=black
}

\definecolor{DarkerGreen}{RGB}{0,179,45}

\newtheorem{exmp}{Example}[section]

\tikzset{
>=stealth',
  punktchain/.style={
    rectangle, 
    rounded corners, 
    % fill=black!10,
    draw=black, very thick,
    text width=10em, 
    minimum height=3em, 
    text centered, 
    on chain},
  line/.style={draw, thick, <-},
  element/.style={
    tape,
    top color=white,
    bottom color=blue!50!black!60!,
    minimum width=8em,
    draw=blue!40!black!90, very thick,
    text width=10em, 
    minimum height=3.5em, 
    text centered, 
    on chain},
  every join/.style={->, thick,shorten >=1pt},
  decoration={brace},
  tuborg/.style={decorate},
  tubnode/.style={midway, right=2pt},
}

\allowdisplaybreaks[3]

\usepackage[toc,page]{appendix} % referencing appendix

 %NOTE to F: If you prefer clos(...) just
                                       %add the parentheses in the macro and
                                       %same for interior, dimension, etc

\newcounter{for}[section]

\newtheorem{itlemma}{Lemma}[section]
\newtheorem{itproposition}[itlemma]{Proposition}
\newtheorem{itfact}[itlemma]{Fact}
\newtheorem{theorem}[itlemma]{Theorem}
\newtheorem{itcorollary}[itlemma]{Corollary}
\newtheorem{itremark}[itlemma]{Remark}
\newtheorem{itremarks}[itlemma]{Remarks}
\newtheorem{itdefinition}[itlemma]{Definition}
\newtheorem{itexample}[itlemma]{Example}

\newenvironment{fact}{\begin{itfact}\rm}{\end{itfact}}
\newenvironment{claim}{\begin{itclaim}\rm}{\end{itclaim}}
\newenvironment{lemma}{\begin{itlemma}}{\end{itlemma}}
\newenvironment{remark}{\begin{itremark}\rm}{\end{itremark}}
\newenvironment{remarks}{\begin{itremarks} \rm}{\end{itremarks}}
\newenvironment{corollary}{\begin{itcorollary}}{\end{itcorollary}}
\newenvironment{proposition}{\begin{itproposition}}{\end{itproposition}}
\newenvironment{definition}{\begin{itdefinition}\rm}{\end{itdefinition}}
\newenvironment{example}{\begin{itexample}\rm}{\end{itexample}}
%\newenvironment{proof}{\noindent {\em Proof}.\ \
%}{\hspace*{\fill}$\halmos$\medskip}
%THIS IS Better.. you'll see why
\newcommand{\be}[1]{\addtocounter{for}{1} \begin{equation}\label{#1}}
\newcommand{\ee}{\end{equation}}
\newcommand{\bl}[1]{\begin{lemma}\label{#1}}
\newcommand{\br}[1]{\begin{remark}\label{#1}}
\newcommand{\brs}[1]{\begin{remarks}\label{#1}}
\newcommand{\bt}[1]{\begin{theorem}\label{#1}}
\newcommand{\bd}[1]{\begin{definition}\label{#1}}
\newcommand{\bp}[1]{\begin{proposition}\label{#1}}
\newcommand{\bfact}[1]{\begin{fact}\label{#1}}
\newcommand{\bc}[1]{\begin{corollary}\label{#1}}
\newcommand{\bex}[1]{\begin{example}\label{#1}}
\newcommand{\ec}{\end{corollary}}
\newcommand{\efact}{\end{fact}}
\newcommand{\eex}{\end{example}}
\newcommand{\el}{\end{lemma}}
\newcommand{\er}{\end{remark}}
\newcommand{\ers}{\end{remarks}}
\newcommand{\et}{\end{theorem}}
\newcommand{\ed}{\end{definition}}
\newcommand{\ep}{\end{proposition}}
\newcommand{\epr}{\end{proof}}
\newcommand{\bpr}{\begin{proof}}
\newcommand{\bcl}[1]{\begin{claim}\label{#1}}
\newcommand{\ecl}{\end{claim}}

%%  These are just for statements, no "qed" box is put in.
%%  (Also for when I have to finish something after a displayed equation
%%        or itemized.)
\newcommand{\ecs}{\end{corollary}}
\newcommand{\eers}{\end{exercise}}
\newcommand{\eexs}{\end{example}}
\newcommand{\eems}{\end{example}}
\newcommand{\els}{\end{lemma}}
\newcommand{\eles}{\end{lemmaex}}
\newcommand{\ets}{\end{theorem}}
\newcommand{\eds}{\end{definition}}
\newcommand{\eps}{\end{proposition}}
%%%%%%%%%%%%%%%%%%%%%%%%%%%%%%%%%%%%%%%%%%%%%%%%%%%%%%%%%%%%%%%%%%%%%%%%

\newcommand{\bi}{\begin{itemize}}
\newcommand{\ei}{\end{itemize}}
\newcommand{\ben}{\begin{enumerate}}
\newcommand{\een}{\end{enumerate}}

\def\vbar{\mathchoice{\vrule height6.3ptdepth-.5ptwidth.8pt\kern-.8pt}
   {\vrule height6.3ptdepth-.5ptwidth.8pt\kern-.8pt}
   {\vrule height4.1ptdepth-.35ptwidth.6pt\kern-.6pt}
   {\vrule height3.1ptdepth-.25ptwidth.5pt\kern-.5pt}}
\def\fudge{\mathchoice{}{}{\mkern.5mu}{\mkern.8mu}}
\def\bbc#1#2{{\rm \mkern#2mu\vbar\mkern-#2mu#1}}
\def\bbb#1{{\rm I\mkern-3.5mu #1}}
\def\bba#1#2{{\rm #1\mkern-#2mu\fudge #1}}
\def\bb#1{{\count4=`#1 \advance\count4by-64 \ifcase\count4\or\bba A{11.5}\or
   \bbb B\or\bbc C{5}\or\bbb D\or\bbb E\or\bbb F \or\bbc G{5}\or\bbb H\or
   \bbb I\or\bbc J{3}\or\bbb K\or\bbb L \or\bbb M\or\bbb N\or\bbc O{5} \or
   \bbb P\or\bbc Q{5}\or\bbb R\or\bbc S{4.2}\or\bba T{10.5}\or\bbc U{5}\or
%   \bbb P\or\bbc Q{5}\or\bbb R\or\bba S{8}\or\bba T{10.5}\or\bbc U{5}\or
   \bba V{12}\or\bba W{16.5}\or\bba X{11}\or\bba Y{11.7}\or\bba Z{7.5}\fi}}

\def \R {{\mathbb R}}

\def \ra {\rightarrow }

\def \a{\alpha}

\def \s{\sigma}

\def \P{{\mathbb{P}}}
\def\g{\gamma}
\def\d{\delta}
\def\e{\varepsilon}

\def\b{\beta}

\def\l{\lambda}
\def\E{{\mathbb{E}}}

\def\1{{\bf 1}}
\def\e{\varepsilon}

\def\ec{\`e }

\def\tZ{\tilde{Z}}
\def\tY{\tilde{Y}}

\title{A stochastic volatility approximation for a tick-by-tick price model with mean-field interaction}

\author{Paolo Dai Pra\footnote{%
Department of Informatics, University of Verona, 
Strada le Grazie 15, 37134 Verona, Italy.
E-mail: \texttt{paolo.daipra@univr.it} 
} 
 \and
Paolo Pigato\footnote{%
Department of Economics and Finance, University of Rome Tor Vergata, Via Columbia 2, 00133 Roma, Italy. 
E-mail: \texttt{paolo.pigato@uniroma2.it
} 
}
}

\date{\today}

\begin{document}

\maketitle

\begin{abstract}
We consider a tick-by-tick model of price formation, in which buy and sell orders are modeled as self-exciting point processes (Hawkes process), similar to the one in  [Bacry, Delattre, Hoffmann, Muzy, \emph{Modelling microstructure noise with mutually
exciting point processes}, Quantitative Finance, 2013] and [El Euch, Fukasawa, Rosenbaum, \emph{The microstructural foundations of leverage effect and rough volatility}, Finance and Stochastics, 2018]. We adopt an agent based approach by studying the aggregation of a large number of these point processes, mutually interacting in a mean-field sense. 

The financial interpretation of the model is that of an asset on which several labeled agents place buy and sell orders following these point processes, influencing the price. The mean-field interaction introduces positive correlations between order volumes coming from different agents that reflect features of real markets such as herd behavior and contagion. 
When the large scale limit of the aggregated asset price is computed, if parameters are set to a {critical} value, a singular phenomenon occurs: the aggregated model converges to a stochastic volatility model with leverage effect and faster-than-linear mean reversion of the volatility process. 

The faster-than-linear mean reversion of the volatility process is supported by econometric evidence, and we have linked it in [Dai Pra, Pigato, \emph{Multi-scaling of moments in stochastic volatility models},
Stochastic Processes and their Applications, 2015] to the observed multifractal behavior of assets prices and market indices. This seems connected to the Statistical Physics perspective that expects anomalous scaling properties to arise in the critical regime.  
\end{abstract}

\bigskip

\noindent{\textbf{Keywords: }}
Stochastic Volatility, Hawkes processes, multifractality, mean-field, non-linearity, criticality

\medskip

\noindent{\textbf{AMS 2020: }} 
	primary: 
			60G55,  	%Point processes (e.g., Poisson, Cox, Hawkes processes)	
    	secondary: 	60F05,  	%Central limit and other weak theorems
    60G44,  	%Martingales with continuous parameter
91G45  	
%Financial networks (including contagion, systemic risk, regulation)

\section{Introduction}

We consider a tick-by-tick model of price formation, 
in which 
price variations are due to buy and sell orders
 of individual agents, that 
are modeled as self-exciting point processes (Hawkes process),  and are mutually interacting in the mean-field sense.
 Our main aim is to use this model to provide a microscopic foundation to stochastic volatility models in which the mean reversion of the volatility is {\em faster-than-linear}. Supported by econometric evidence \cite{BAKSHI}, models with quadratic mean reversion in the volatility process have been used for option pricing (e.g. on bitcoins) and connected issues  \cite{sepp,lewis}, and it is the drift form in the volatility function of the $3/2$-model \cite{baldeaux}. In \cite{daipra.pigato}, we have linked faster-than-linear mean reversion of the volatility process to the observed multifractal behavior of asset and index prices \cite{Jiang_2019}. 

In this paper we show that the price process, defined as the difference between the total numbers of buy and sell orders, suitably rescaled with the number of agents, converges in distribution to the price process of a stochastic volatility model with quadratic mean reversion in the volatility. In what follows we illustrate further details of our model and results, in comparison with the existing literature.

We follow here the general principle, inspired by Statistical Physics, of investigating to what extent macroscopic financial dynamics can be derived as limits of microscopic, possibly agent-based models \cite{Chakraborti01072011,Cont_Bouchaud_2000}. Moreover, inspired in particular by 
\cite{Bacry01012013,el2018microstructural,jaisson.rosenbaum,bacry1,viens}, the microscopic dynamics of buy and sell orders is modeled with Hawkes processes, which are self-exciting point processes. In particular, in the recent work \cite{el2018microstructural}, a rough version of the Heston model is derived as high-frequency limit of a market model in which price variations due to buy and sell orders are modeled as Hawkes. In addition to roughness, this work provides a microscopic foundation of the \emph{leverage effect}. Unlike \cite{el2018microstructural}, where the microscopic model concerns the total flow of buy and sell orders, we introduce an agent based dynamics, by modeling the orders of each individual agent. The intensity according to which agents place buy and sell orders is a function of a time-weighted integral of the total past orders; this introduces a simple form of mean field interaction among agents. The simplicity lies in the fact that the total numbers of buy and sell orders are {\em sufficient statistics}: their aggregated dynamics can be expressed without referring to individual agents. This dimensional reduction appears in other classical mean field models, such as the mean field versions of the Voter Model, the Contact Process and the stochastic Ising model. As a consequence, the total numbers of buy and sell orders follows dynamical rules similar to those in \cite{el2018microstructural}, with the difference that we allow a nonlinear dependence of the intensities on the past orders. Besides this nonlinearity, that we discuss later, the agent based formulation offers at least two advantages. On one hand the hydrodynamic scaling needed for the limit is clearly linked to the size of agent's population, providing a sound interpretation of the scaling regime to which the limit stochastic volatility model provides a good approximation to the microscopic dynamics. On the other hand, our model allows natural extensions, including  inhomogeneities in the agent's population (e.g. {\em informed} and {\em uninformed agents}) or metaorders modeling through high self-excitation of orders by the same agent; in the last case the aggregated numbers of buy and sell orders do not admit closed dynamical roles. These extensions will be discussed in Subsection \ref{subsec:ext}.

The nonlinearity in the dependence of the intensities on the past orders is the main feature of the model we propose.  The linearity assumption is used e.g. in \cite{el2018microstructural}, and it is a key ingredient in the asymptotic analysis obtained there. We show that the introduction of nonlinearity, which can be motivated by saturation effects, significantly changes the asymptotic behavior. The limit is still a stochastic volatility models, but with nonlinear mean reversion. Models with superlinear mean reversion have been proposed and supported by several authors \cite{BAKSHI,sepp,lewis,baldeaux,daipra.pigato}; to our knowledge this is the first research proposing a justification in the context of price dynamics.
Relaxing linearity, moreover, forces us to use different arguments form those in \cite{el2018microstructural}. We make a strict assumption on the memory function, that we suppose exponential; this allows a Markovian description of the pre-limit process and the use of tools from weak convergence  to diffusion processes. Nonlinearity reveals the possibility of scaling properties which differ from the linear case, and naturally leads to quadratic mean reversion. This link between nonlinearity of the intensities in the pre-limit process and nonlinearity of the volatility mean reversion reveals further the effectiveness and flexibility of modeling with Hawkes processes. 

Similarly to \cite{el2018microstructural,szymanski,Bouchaud}, we work under the assumption of {\em near instability}: model's parameters are set on the boundary of the stability region, i.e. the region where (rescaled) intensities remain bounded over time. In a financial setting, this represents the fact that financial markets operate in a regime of near-critical endogeneity, where the majority of market events are generated by feedback from prior events, in the regime where even a slight increase in feedback would render the system unstable \cite{Hardiman2013Critical,HardimanBouchaud2014,Wehrli2020}.

This corresponds to the same principles that led to systematic study of {\em critical} models in statistical mechanics. The question of why many real complex systems self-organize at or close to the critical point is, to a large extent, yet to be understood at a rigorous level. The notion of {self-organized criticality} was proposed in the fundamental paper \cite{bak88}. In few models this phenomenon has allowed a rigorous analysis; we mention the sandpile models  \cite{AJ04} and Curie-Weiss type models \cite{CG16}.
Self organized criticality is often associated with the existence of multiple scaling regimes (multifractality), see \cite{MG14} for an account. Multifractality is a well extablished stylized fact of market indices,
stocks, commodities, exchange rates, interest rates and other financial time series \cite{Jiang_2019}. In connection to these facts, in \cite{daipra.pigato}, we showed that in a certain class of Levy-driven volatility models, multifractality (in the form of multiscaling of moments, see next Remark \ref{rm:multiscaling}) is possible only if the mean-reverting drift function is superlinear. 
We remark that, in the model we analyse in the present paper, the nonlinearity of the mean reversion is an example of {\em anomalous} critical scaling. If the parameters were set below the critical point, then the price process would converge to a model with linear mean reversion, but that would become trivial as  the critical point is approached, in the sense that the mean reversion would vanish. At the critical point the time needs to be rescaled to obtain a nontrivial limit, and linearity is lost in the limit.  This is close in spirit to anomalous fluctuation theorems in models motivated by Statistical Mechanics, where non Gaussian distributions appear in the limit \cite{CE88, BS07}. We further mention \cite{sornette2013critical} for a comprehensive review on this topic, and \cite{challet.marsili} for applications to market models.

The nonlinearity of the model, and the related anomalous scaling, have a strong impact on the techniques needed to establish the final limit theorem. For a brief illustration of this point, we return to the fact that, due to the exponential memory function, the intensity of pre-limit process admits a Markovian description. Thus it could be expected that, at least at a semi-rigorous level, the limit theorem could be derived from strong limits of infinitesimal generators. This is not the case, as the infinitesimal generator has terms that diverge as the number of agents goes to infinity. These terms have the effect of making the dynamics to collapse onto a low dimensional manifold. This type of problem appears in perturbation theory of operators (see \cite{da83} for an application to a mean field model).

Finally, we mention again that our model keeps a relevant property of the linear models in \cite{el2018microstructural}, namely the leverage effect, which refers to the widely documented negative correlation between observed changes in stock returns and volatility \cite{engle.ng}, see also \cite{el2018microstructural,lp} and references therein. 
It is often incorporated in models for financial price dynamics through a negative correlation between the noises driving price and volatility, see \cite{hagan,heston,hull.white,stein}.
Leverage effect can be explained by the fact that when an asset price decreases, the ratio of the company's debt with respect to the equity value becomes larger, and as a consequence volatility increases. Other ``macroscopic'' explanations from financial economics have been proposed \cite{black1976,Christie:1982vf,Hentschel,French}, as well as ``psycological'' explanations based on asymmetric responses by investors \cite{Hens:2006tv}. Here, as in \cite{el2018microstructural}, leverage effect observed on macroscopic quantities is a product of microstructural features, with in addition an explicit, stylized representation of interactions between agents.

\paragraph{Acknowledgements.}
We are grateful to Mathieu Rosenbaum, Fabrizio Lillo and Alessandro Calvia for discussion and several suggestions that led to an improved version of the paper. We also thank the editor and reviewer for their careful reading and useful insights. 
PP acknowledges financial support from University of Rome Tor Vergata via Project E83C25000470005.

\paragraph{Outline.} 
In Section \ref{sec:results} we describe our model, state the main results of the paper and discuss them. Section \ref{sec:proofs} contains proofs and technical material, while in the appendix we collect known results on which our proofs hinge and some auxiliary computations.

\section{Model description and main results}\label{sec:results}

We consider a market where $N$ agents are placing  orders for a given asset.  For each agent $i$ we let $N_i^+$ (resp. $N_i^-$) be the processes counting her buy (resp. sell) orders. We assume the contribution of agent $i$ to the asset log-price is  the sum of the positive and negative orders placed, namely
\be{def:singlep}
P_i = N^+_i - N^-_i.
\ee
We refer the reader to 
\cite{ farmer2004what, Lillo2023OrderFlow, Gatheral01082010,Huberman,durin.et.al} and references therein 
for a discussion of the process of price formation in financial markets as a consequence of the arrival of orders, and
for more information on impact of buy and sell orders on price. We also interpret this model as a decomposition over individual agents of the model in \cite{el2018microstructural,Bacry01012013}. See next Remark \ref{rm:agg.model} for an alternative interpretation in the case of market indices or portfolios.

We now describe the dynamics of the processes $N_i^{\pm}$. 
Consider the function $\varphi(t) = e^{-\a t}$, ruling the memory of the  process. Note that more general choices of $\varphi:[0,+\infty) \ra [0,+\infty)$ have been considered, see for example \cite{el2018microstructural}; however, our analysis relies on this choice. Let us introduce two sequences $\nu_N^{\pm}$ of $\s$-finite measures on $[0,+\infty)$,  and define
\begin{equation}\label{def:X}
X^{\pm}_i(t) := \int_{[0,t]} \varphi(t-s) \left[dN^{\pm}_i(s) + \nu_N^{\pm}(ds)\right].
\end{equation}
Note that $X_i^{\pm}(t) \geq 0$. The term $\int_{[0,t]} \varphi(t-s) dN^{\pm}_i(s)$ is the sum of the past jumps of the process $N_i^{\pm}$ weighted by the memory function $\varphi$. The (positive) measures $\nu_N^{\pm}$ represent external signals, and impact all agents; precise assumptions on $\nu_N^{\pm}$ will appear later.
We then consider the empirical means
\[
m_N^{\pm}(t) := \frac{1}{N} \sum_{i=1}^N X^{\pm}_i(t),
\]
that, as we will see in Lemma \ref{lemma:mN}, form a two dimensional Markov process.
We will assume that the point processes $(N^{\pm}_i)_{i=1}^N$ are conditionally independent given $(X^{\pm}_i)_{i=1}^N$. Each ``upward'' jump process $N^+_i$ has a stochastic intensity given by
\begin{equation}\label{def:lambda}
\l_N^+(t) := f(m_N^+(t) + \b \g m_N^-(t)),
\end{equation}
while the``downward" jumps have stochastic intensity
\begin{equation}\label{def:lambdameno}
\l_N^-(t) := f( \g m_N^+(t) + (1+(\b-1) \g) m_N^-(t)),
\end{equation}
where 
$f$ is a given increasing, concave function of class $\mathcal{C}^3$, with $f(0) = 0$, $f'(0)>0$, $f''(0)<0$ with $f'$, $f''$ and $f'''$ bounded,  and $\b \geq 1,  \, \g\in [0,1]$ are parameters ruling the self and cross excitation of upward and downward jumps (see Remarks \ref{rm.jump.parameters} and \ref{rm.f}).
 
The processes $(N^\pm_i)_{i=1}^n$  are called Hawkes processes. Note that the intensity is an increasing function of the mean $m_{N}^{\pm}$ and is the same for each agent, which should reflect contagion and homogeneity of the model. The \emph{mean-field} interaction models the fact that buy orders, with consequent increase in the log-price (and viceversa, sell orders with  consequent decrease in the log-price) are exciting orders of the same type in other labeled agents.
Moreover, the processes $X_i^{\pm}$ that are responsible for the self-excitation of the $N_i^{\pm}$, are also driven by a common, external signal modeled by the measure $\nu_N^{\pm}$, representing a ``baseline'' volatility. In what follows we will assume that
\[
\nu_N^{\pm}(dt) = a^{\pm}_N \delta_0(dt) + b_N^{\pm} dt,
\]
where $\delta_0$ denotes the Dirac measure and $a_N^{\pm}, b_N^{\pm}$ are positive constants satisfying
\be{assab}
\lim_{N \ra +\infty} \sqrt{N} a_N^{\pm} = a^{\pm} \in (0,+\infty), \ \ \ \lim_{N \ra +\infty} \sqrt{N} b_N^{\pm} = b^{\pm} \in (0,+\infty).
\ee
Note that  $a_N^{\pm}$ is responsible for the value at time $t=0$ of the stochastic intensities $\l_N^{\pm}$, while  $b_N^{\pm}$ represent stationary external signals. 

We now introduce the rescaled log-price process
\be{def:pn}
\Pi_N(t)  = \frac{1}{\sqrt{N}} \sum_{i=1}^N P_i(\sqrt{N}t). 
\ee
Our main result is the following. 

\begin{theorem} \label{th:main}
Suppose the criticality condition 
\be{intro:critical}
\a = f'(0)(1+\b\g)
\ee
holds.
The process $\Pi_N$ converges in  distribution in $(0,T]$ to the process $\pi$, where $(\pi,y)$ solves the SDE
\begin{equation} \label{volmod}
\begin{split}
d\pi(t) & =\beta_{\pi} dt + \sigma_{\pi} \sqrt{f'(0) y(t)} dW(t) \\
dy(t)  & = (\beta_{y}  + \alpha_{y} f''(0) y^2(t) )dt + \sigma_{y} \sqrt{f'(0)y(t)} dB(t)
\end{split}
\end{equation}
with initial condition
\[
\pi(0)  = \frac{1-\g}{\g(1+\b)}(a^+ - a^-),\quad 
y(0)  = \frac{1+\b\g}{1+\b} a^+ + \b \frac{1+\b\g}{1+\b} a^-,
\]
where the constants in the dynamics depend on $\beta,\gamma,b^{\pm}$ as below
\[
\begin{split}
\beta_{\pi}&= \frac{1-\g}{\g(1+\b)}(b^+ - b^-);\quad
\sigma_{\pi}=\sqrt{2} \frac{1+\b\g}{\g(1+\b)} ;\\
\beta_{y}&=\frac{1+\b\g}{1+\b} b^+
+ \b \frac{1+\b\g}{1+\b} b^- ;\quad
\alpha_{y}= \frac12 (1+\b\g);\quad
\sigma_{y}= \frac{\sqrt{(1+\b^2)}(1+\b\g)}{1+\b}
\end{split}\]
and $W,B$ are standard Brownian motions, with correlation
\be{intro:limcorr}
d\langle B,W \rangle_t =  \frac{(1-\b^2)\g}{(1+\b\g)\sqrt{2(1+\b^2)}} dt. 
\ee
\label{th:initial}
\end{theorem}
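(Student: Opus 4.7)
The plan is to identify the limit of $(\Pi_N, y_N)$, where $y_N$ is a suitable normalization of $m_N^+ + \beta m_N^-$, as the unique solution of the SDE system \eqref{volmod}, via convergence of predictable characteristics and a martingale problem. The exponential memory $\varphi(t) = e^{-\alpha t}$ makes each $(X_i^+, X_i^-)$ a piecewise--deterministic Markov process and yields the semimartingale decomposition
\[
dm_N^\pm(t) \;=\; \bigl[-\alpha\, m_N^\pm(t) + \lambda_N^\pm(t) + b_N^\pm\bigr]\,dt \;+\; dM_N^\pm(t),
\]
with $M_N^\pm$ orthogonal compensated jump martingales of predictable quadratic variation $\tfrac{1}{N}\int_0^\cdot \lambda_N^\pm(s)\,ds$, together with the identity $\tfrac{1}{N}\sum_{i=1}^N N_i^\pm(t) = m_N^\pm(t) - a_N^\pm + \alpha\int_0^t m_N^\pm(s)\,ds - b_N^\pm t$ coming from integrating the $X_i^\pm$-equation. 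These two relations rewrite $\Pi_N$ explicitly as an affine functional of $u_N := m_N^+ - m_N^-$ and its time integral, plus a compensated jump martingale.

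Next I would introduce slow--fast coordinates adapted to the critical linearization. Linearizing the drift of $(m_N^+,m_N^-)$ at the origin gives the matrix $f'(0)\gamma\bigl(\begin{smallmatrix}-\beta & \beta\\ 1 & -1\end{smallmatrix}\bigr)$, whose eigenvalues are $0$ (left eigenvector $(1,\beta)$, giving the slow coordinate $y_N := m_N^+ + \beta m_N^-$) and $-\kappa := -f'(0)\gamma(1+\beta)$ (left eigenvector $(1,-1)$, giving the fast coordinate $u_N$). Under the critical rescaling $t \mapsto \sqrt{N}\,t$ and amplitude rescaling by $\sqrt{N}$, the fast variable $\tilde u_N$ acquires relaxation rate of order $\sqrt{N}\kappa$ and equilibrates instantaneously on the slow time scale, while the linear drift of $\tilde y_N$ cancels by criticality and its effective drift is produced by the quadratic Taylor coefficient $\tfrac12 f''(0)$, which is precisely what will yield the mean-reverting term $\alpha_y f''(0) y^2$ in the limit.

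The core of the argument is then a stochastic averaging step, in the spirit of Kurtz's averaging theorem for jump semimartingales: one substitutes $\tilde u_N$ by its quasi-stationary conditional law given $\tilde y_N$ to derive the effective slow SDE, and reads off the baseline constants $\beta_y$ and $\beta_\pi$ from the rescaled $b_N^\pm$ contributions. Simultaneously, the predictable quadratic and cross-variations of the martingale parts of $(\Pi_N, y_N)$ are computed from the explicit formulas for $\langle M_N^\pm\rangle$ using the leading-order expansion $f(x) \approx f'(0)x$, and one checks that they converge to $\sigma_\pi^2 f'(0) y\,dt$, $\sigma_y^2 f'(0) y\,dt$, and the cross-variation reproducing \eqref{intro:limcorr}. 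Combined with tightness of $\{(\Pi_N, y_N)\}_N$ in the Skorokhod space (via uniform moment bounds on $(m_N^\pm,\Pi_N)$ coming from the boundedness of $f',f'',f'''$, plus an Aldous-type criterion) and pathwise uniqueness of the limit SDE (a CIR-type equation for $y$, to which a Yamada--Watanabe argument applies), this identifies the limit as the unique solution of \eqref{volmod}.

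The hardest step is the stochastic averaging in the singular regime where the fast variable has diverging relaxation rate $\sqrt{N}\kappa$. One must justify the replacement of $\tilde u_N$ by its conditional equilibrium both inside the nonlinear drift of $\tilde y_N$ and inside the predictable cross-bracket with the $\Pi_N$-noise that produces the leverage correlation, with quantitative control of the quasi-stationary fluctuations; tracking the precise constants that appear in $\beta_\pi,\beta_y,\sigma_\pi,\sigma_y$ and in \eqref{intro:limcorr}, as well as handling the positivity and the square-root singularity of $y$ at zero in the uniqueness argument, are the main bookkeeping challenges.
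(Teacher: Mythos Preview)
Your slow--fast decomposition is exactly the one the paper uses: their $Y_N$ and $Z_N$ are constant multiples of your rescaled $y_N=m_N^++\beta m_N^-$ and $u_N=m_N^+-m_N^-$, and you correctly identify that criticality kills the linear drift of the slow coordinate while the quadratic Taylor term of $f$ produces the $\tfrac12(1+\beta\gamma)f''(0)y^2$ mean reversion. The paper, however, does not invoke a Kurtz-type averaging theorem. It proceeds in two explicit steps: first it proves a quantitative collapse $\sup_N \sqrt N\,\mathbb E[\tilde Z_N^2(t)]<\infty$ via a differential inequality for $\mathbb E[\tilde Z_N^2]$, and upgrades this to $\sup_{t\in[\varepsilon,T]}\tilde Z_N^2\to 0$ in probability using the Comets--Eisele collapsing-processes theorem; then it applies the Ethier--Kurtz diffusion approximation with localization by $\tau_h^N=\inf\{t:Y_N(t)>h\}$, so that tightness and convergence come in one package and no separate Aldous argument or moment bounds are needed.

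The step you correctly flag as the hardest is also where your plan is underspecified, and where naive averaging would give the wrong answer. In the semimartingale decomposition of $\Pi_N$ the divergent term is $2f'(0)\sqrt N\int_0^t Z_N(s)\,ds$; replacing $Z_N$ by its quasi-stationary mean (of order $1/\sqrt N$) recovers the drift $\beta_\pi$, but the \emph{fluctuations} of $Z_N$ around that mean are driven by the same compensated jumps as $\Pi_N$ (indeed $M_N^Z=\tfrac{1-\gamma}{2}M_N^\Pi$), and they feed back an $O(1)$ multiplicative correction into the martingale part of the price. The paper handles this by substituting the $Z_N$-equation itself into $\Pi_N$ to eliminate $\sqrt N\int Z_N$ algebraically in favour of $Z_N(t)$, $E_N+F_N t$ and $M_N^\Pi$; this is the probabilistic analogue of a first-order corrector (Poisson-equation) step in singular perturbation theory. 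The resulting factor on $M_N^\Pi$ is $1+\tfrac{(1-\gamma)f'(0)}{\alpha-(1-\gamma)f'(0)}=\tfrac{1+\beta\gamma}{\gamma(1+\beta)}$, which is precisely what produces $\sigma_\pi=\sqrt2\,\tfrac{1+\beta\gamma}{\gamma(1+\beta)}$ and, through the cross-variation with $M_N^Y$, the correlation \eqref{intro:limcorr}. An averaging scheme that only substitutes the quasi-stationary law in the drift would yield $\sigma_\pi=\sqrt2$ and the wrong correlation; your framework can be made to work, but only if you carry out this corrector step explicitly rather than leave it inside a black-box averaging theorem.
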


\begin{remark}\label{rm.jump.parameters}
Note that \eqref{def:lambda} and \eqref{def:lambdameno} express the jump intensity as a concave function of a functional of past jumps. For instance, referring to \eqref{def:lambda}, the term $m^+_N(t)$ can be interpreted as the impact of past jump upwards on $\lambda^+_N$, and $\beta \gamma m_N^-(t)$ as the impact of past jump downwards on $\lambda^+_N$. The condition $\gamma \leq 1$ corresponds to the requirement ``past jump upwards (resp. downwards) impact more on $\lambda^+_N$ than on $\lambda^-_N$ (resp. more on $\lambda^-_N$ than on $\lambda^+_N$)'', which we need to encode herd behavior in our model. Indeed,
\[
m_N^+ \geq \g m_N^+  \mbox{ and } (1+(\b-1)\g)m_N^- \geq \b \g m_N^-,
\]
with equality for $\g = 1$. The condition $\b \geq 1$ is related to the higher impact of the jump downwards, in the following sense:
``the impact of jump downwards on $\l_N^+$ (resp. $\l_N^-$)
is greater than the impact of jump upwards on 
$\l_N^-$ (resp. $\l_N^+$). 
Indeed the coefficient $\b\g$ of $m_N^-(t)$ in \eqref{def:lambda} is greater than the coefficient $\g$ of $m_N^+(t)$ in \eqref{def:lambdameno}, and the coefficient $(1+(\b-1)\g)$ of $m_N^-$ in \eqref{def:lambdameno} is greater than the coefficient $1$ of $m_N^+$ in \eqref{def:lambda}, with equality for $\b=1$. 
This condition seems natural in a financial context due to the higher ``excitatory power'' of downard price movements, i.e., the fact that investors react more to a decrease than to an increase in the price; cf. also the explanation in \cite{el2018microstructural}. This  corresponds to $\beta\geq 1$, and gives rise to leverage effect in the limit system in the sense of Remark \ref{rm.leverage}.
Note also that the requirements on the impact of jump upwards and downwards lead naturally to introduce at least two parameters in the model, possibly with a different combination from the one we have chosen.
\end{remark}

\begin{remark}\label{rm.f}
In our setting, equations \eqref{def:lambda} and \eqref{def:lambdameno} encompass the fact that the effect on present jumps intensity of compounded past jumps averaged over all the agents is not linear, but mediated by the increasing concave function $f$,  meaning that a saturation effect is at play, with larger past jumps producing a stronger intensity of present jumps, with a decreasing marginal dependence. Note that the $y$ process in \eqref{volmod} is mean reverting, since $f''(0)<0$.
\end{remark}

\begin{remark}[Leverage effect]\label{rm.leverage}
Since $\gamma\in[0,1]$ and $\beta\geq 1$, it always holds that ${(1+\b)\g}/{(1+\b\g)}\in [0,1]$ and $\langle B,W \rangle\in(-1/\sqrt{2},0]$. When $\b\to\infty$, we have $\langle B,W \rangle\to -1/\sqrt{2}$. Recall that in \cite{el2018microstructural}, the correlation in the limit model is ${(1-\b)}/{\sqrt{2(1+\b^2)}}$, which we obtain here with $\gamma=1$. The fact that  $\langle B,W \rangle\leq 0$ means that this model displays a negative relationship between changes in stock returns and in volatility, the so called leverage effect.
This is a reflection of the fact that $\beta\geq 1$, representing, as seen in Remark \ref{rm.jump.parameters}, the fact that  jumps downwards are overall more excitatory than jumps upwards.
\end{remark}

\begin{remark}\label{rm:agg.model}
[Index and portfolio dynamics]
A different possible interpretation of the aggregated price in \eqref{def:pn} and its limit  $(\pi_{t})_{t}$, is that of an index or portfolio composed of $i=1,\dots, N$ assets. The variations of the index are given by an average of the variations of the single components of the index, each represented by the sum of its positive and negative ``micro'' variations. Each  individual asset has dynamics as in \cite{el2018microstructural,Bacry01012013}. The asset prices are self exciting and are exciting each other, with jumps in one price exciting jumps in all the others through a mean-field interaction. The diffusive dynamics is seen at a slower time scale than the order book time scale where we see the jumps in the prices.
\end{remark}

\begin{remark}[Criticality as a stylized fact of financial markets] Empirical research strongly suggests that markets self-organize so as to be poised at the border between stability and instability. In particular, financial markets operate in a regime of near-critical endogeneity, where the majority of market events are generated by feedback from prior events, and the system hovers at the boundary of stability, such that even a slight increase in feedback would render the system unstable.

In \cite{Hardiman2013Critical}, high-frequency futures data are analysed using Hawkes processes, and the branching ratio (the expected number of “offspring” events triggered by a single trade) is consistently found to be close to one, indicating that market activity is largely endogenous. Other studies confirm that calibrated branching ratios across different assets and time periods remain close to unity \cite{HardimanBouchaud2014,Wehrli2020}, supporting the idea that markets are self-organized near this critical threshold. These findings indicate that financial markets operate in a marginally stable, highly endogenous regime, where feedback dominates dynamics, in line with the nearly-unstable Hawkes framework of \cite{el2018microstructural}.

We interpret our criticality condition \eqref{intro:critical} as encoding an analogous fact, in an interacting setting. Specifically, the memory parameter $\alpha$ controls the level of endogeneity, i.e., the persistence of the influence of previous jumps on present activity. Larger values of $\alpha$ correspond to shorter memory and reduced endogeneity, while smaller values of $\alpha$ correspond to longer memory and stronger endogeneity. Condition \eqref{intro:critical} states that the memory of the system sits exactly at the critical point where any further increase in feedback would break stability in the limit system in Theorem \ref{th:main}.
\end{remark}

\begin{remark}[Multifractality - Multiscaling of moments]
Let $(X_t)_{t \geq 0}$ be a continuous-time martingale, having stationary increments; in financial applications this could be identified with the de-trended {\em log-price} of an asset, or the log-price with respect to the martingale measure used for pricing derivatives. We say that multiscaling of moments occurs if $\E\left(|X_{t+h} - X_t|^q \right)$ scales, in the limit as  $h \downarrow 0$, as $h^{A(q)}$, with $A(q) = \frac{q}{2}$ for small $q$ but $A(q) < \frac{q}{2}$ beyond a certain threshold.  The presence of multiscaling in financial time series is well established and various models have been proposed to capture it, most notably multi-fractal processes \cite{calvet.fisher}, see also \cite{wu.muzy.bacry,NR18,acdp,BRANDI}. In \cite{daipra.pigato} we have shown that in a stochastic volatility model whose volatility $V_t$ solves the stochastic differential equation
\be{sde}
dV_t \, = \, - f(V_t) dt + dL_t,
\ee
with $L$ a {\em Levy process},  multiscaling occurs if and only if the characteristic measure of $L$ has power law tails at infinity, and $f$ is a {\em superlinear} mean reversion function, a fact that is supported e.g. by the empirical findings in \cite{BAKSHI}.  Therefore, the result in Theorem \ref{th:initial} is a possible explanation of the origin of the superlinear mean reversion, while jumps may be produced by exogenous shocks.
\label{rm:multiscaling}

\end{remark}

\subsection{Extensions of the model} \label{subsec:ext}

In this section we illustrate two extensions of the model in Theorem \ref{th:main}. First we break the symmetry among agents, letting the model's parameters to vary within the population. Later we introduce self-interaction: agent's history has more impact on her own future actions than on that of other players, which we interpret as the fact that large trading orders are often split into smaller orders, executed over a certain time span, leading to an ``apparent'' high self-excitation \cite{moro2009marketimpact}.
This last model loses one of the key properties of the former two; namely that the aggregated means $m_N^{\pm}(t)$ evolve as a two dimensional Markov process  (see Lemma \ref{lemma:mN}). We state the theorems concerning the limiting behavior of these models. The proofs require rather mild modifications with respect to that of Theorem \ref{th:main}, and are given in Section \ref{sec:proofs}.

\bigskip

\noindent
{\bf Inhomogeneous agents}
\smallskip

The agent based formulation allows to introduce heterogeneities in the population of agents, in particular in their sensitivity to past jumps. By this we mean that the stochastic intensities $\l_N^{i,\pm}$ depend on the individual agent:
\[
\begin{split}
\l_N^{i,+}(t) &  := f(m_N^+(t) + \b^N_i \g^N_i m_N^-(t)) \\
\l_N^{i,-}(t) &  := f(\g^N_i m_N^+(t) + (1+(\b^N_i-1) \g^N_i )m_N^-(t)).
\end{split}
\]
Note that agents react to the same signals $m_N^{\pm}$, but their ''sensitivity'' may differ: faster reaction to signals could model, for instance, higher degree of information. The fact of allowing the parameters $\b^N_i \geq 1$ and $\g^N_i \in [0,1] $ to depend on both $i$ and $N$ allows a simpler formulation of the needed assumptions. For the same reason, also the parameter $\a = \a_N$ of the memory function $\varphi(t) = e^{- \a_N t}$ is allowed to depend on $N$. To formulate the needed condition on the parameters we introduce the {\em empirical measure}
\[
\rho_N := \frac{1}{N} \sum_{i=1}^N \delta_{(\b^N_i,\g^N_i)},
\]
and make the following assumptions:
\be{ass:empdis} 
\begin{array}{c}
\rho_N \mbox{ converges in distribution to } \rho \\ \\
\lim_{N \ra +\infty} \a_N = \a,
\end{array}
\ee
where $\rho$ is a probability on $[1,+\infty)\times [0,1]$, with compact support, and $\a>0$. We also introduce the following notation: for a function $g: [1,+\infty)\times [0,1] \rightarrow \R$ set
\[
\overline{g(\b,\g)}_N := \frac{1}{N} \sum_{i=1}^N g(\b^N_i,\g^N_i), \hspace{1cm}  \overline{g(\b,\g)} := \int g(\b,\g)\rho(d\b,d\g).
\]
Note that 
\[
\overline{g(\b,\g)} = \lim_{N \ra +\infty} \overline{g(\b,\g)}_N
\]
if $g$ is continuous. We are now ready to state the extension of Theorem \ref{th:main}.

\begin{theorem} \label{th:maininhom}
Suppose the criticality condition 
\be{}
\a_N = f'(0)(1+\overline{\b\g}_N)
\ee
holds for every $N$.
The process $\Pi_N$ converges in  distribution in $(0,T]$ to the process $\pi$, where $(\pi,y)$ solves the SDE
\begin{equation} \label{volmodinhom}
\begin{split}
d\pi(t) & =\beta_{\pi} dt + \sigma_{\pi} \sqrt{f'(0) y(t)} dW(t) \\
dy(t)  & = (\beta_{y}  + \alpha_{y} f''(0) y^2(t) )dt + \sigma_{y} \sqrt{f'(0)y(t)} dB(t)
\end{split}
\end{equation}
with initial condition
\[
\pi(0)  = \frac{1-\overline{\g}}{\overline{\g} + \overline{\beta\g}}(a^+ - a^-),\quad 
y(0)  = \frac{\overline{\g}(1+\overline{\b\g})}{\overline{\g}+\overline{\b\g}} a^+ + \frac{\overline{\b\g}(1+\overline{\b\g})}{\overline{\g}+\overline{\b\g}}a^-,
\]
where the constants in the dynamics depend on $\rho$ and $b^{\pm}$ as below
\[
\begin{split}
\beta_{\pi}&= \frac{1-\overline{\g}}{\overline{\g} + \overline{\b\g}}(b^+ - b^-);\quad
\sigma_{\pi}=\sqrt{2} \frac{1+\overline{\b\g}}{\overline{\g} + \overline{\b\g}} ;\\
\beta_{y}&=\frac{\overline{\g}(1+\overline{\b\g})}{\overline{\g}+\overline{\b\g}} b^+ + \frac{\overline{\b\g}(1+\overline{\b\g})}{\overline{\g}+\overline{\b\g}}b^- ;\quad
\alpha_{y}= \frac12 \frac{\overline{\g} \overline{(1+\b\g)^2}}{(\overline{\g}+\overline{\b\g})(1+\overline{\b\g})};\quad
\sigma_{y}= \sqrt{\overline{\g}^2 + \overline{\b\g}^2} \frac{1+\overline{\b\g}}{\overline{\g} + \overline{\b\g}}
\end{split}
\]
and $W,B$ are standard Brownian motions, with correlation
\be{intro:limcorrinhom}
d\langle B,W \rangle_t =  \frac{\overline{\g}^2 - \overline{\b\g}^2}{(1+\overline{\b\g})\sqrt{2(\overline{\g}^2+\overline{\b\g}^2)}} dt. 
\ee
\label{th:initialinhom}
\end{theorem}

\medskip
\noindent
{\bf Self excitation}

\smallskip

The intensities defined in \eqref{def:lambda} and \eqref{def:lambdameno} are the same for all agents, and only depend on the total number of jumps. 
We assume in this extension that an individual player weights her own history more than that of other agents, representing the apparent effect of many small orders, clustered in time, coming from the same agent, resulting from the execution of a large metaorder. A possible way of modeling this phenomenon is to replace the intensities in \eqref{def:lambda} and \eqref{def:lambdameno} by the agent dependent intensities
\be{lambdai}
\begin{split}
\l_N^{i,+}(t) &  := f(m_N^{i,+}(t) + \b \g m_N^{i,-}(t)) \\
\l_N^{i,-}(t) & := f( \g m_N^{i,+}(t) + (1+(\b-1) \g) m_N^{i,-}(t)),
\end{split}
\ee
where
\[
m_N^{i,\pm}(t) = m_N^{\pm}(t) + \frac{\kappa}{\sqrt{N}} X^{\pm}_i(t)
\]
for a positive constant $\kappa$. In this way the history of agent $i$ has an impact of order $\frac{1}{\sqrt{N}}$, while that of other players have impact of order $\frac{1}{N}$. Cleary many other scaling could be proposed; we choose this as it leads to a slight modification of Theorem \ref{th:main}. For this model, as soon as $f$ is nonlinear, the dynamics of aggregated variables $m_N^{\pm}$ is non Markovian (see Remark \ref{rem:nonmar}). However, the Markov property is restored at the limit, which is the same as in Theorem \ref{th:main}, except for an extra linear term in the drift of the volatility.

\begin{theorem} \label{th:mainself}
Suppose the criticality condition 
\be{intro:criticalself}
\a = f'(0)(1+\b\g)
\ee
holds.
The process $\Pi_N$ converges in  distribution in $(0,T]$ to the process $\pi$, where $(\pi,y)$ solves the SDE
\begin{equation} \label{volmodself}
\begin{split}
d\pi(t) & =\beta_{\pi} dt + \sigma_{\pi} \sqrt{f'(0) y(t)} dW(t) \\
dy(t)  & = (\beta_{y} +  \theta_y y(t)+ \alpha_{y} f''(0) y^2(t) )dt + \sigma_{y} \sqrt{f'(0)y(t)} dB(t)
\end{split}
\end{equation}
where
\[
\theta_y = f'(0)\frac{1+\b\g}{1+\b}\kappa,
\] 
and all remaining parameters and initial conditions are as in Theorem \ref{th:main}.
\end{theorem}

\section{Proofs}\label{sec:proofs}
The proof of Theorem \ref{th:main} is divided in  two steps. We first identify a {\em microscopic } volatility process $Y_N(t)$, and we show it converges in distribution to a solution of the second equation in \eqref{volmod}. We use martingale methods for this convergence; the main difficulty is to show that certain terms in the dynamics of $Y_N$ vanish as $N \rightarrow +\infty$, which is essential to obtain a closed equation for the limit of $Y_N$. As it is customary in critical dynamics, there is one divergent term (as $N \rightarrow +\infty$) in the dynamics of $Y_N$, which however vanishes if the criticality condition \eqref{intro:critical} holds. In the second step we prove the convergence of the price process. This turns out to be more difficult, as the divergent term in the price dynamics does not disappear at criticality. This type of problem appears in perturbation theory of operators (see \cite{da83} for an application to a mean field model). Here we circumvent the problem by applying a theorem on {\em collapsing processes} due to Comets \& Eisele \cite{CE88}, which may be seen as the probabilistic counterpart of a perturbation theory argument.
Theorem \ref{thm:diffusion:approx}, Theorem \ref{collapsing}, Lemma \ref{lemma:collapsing} and Theorem \ref{thm:eq.quad}, which we use along the proof, are given in the appendix. 

In this section we denote by $A,B,C$ constant independent of $N$, that may vary from line to line.

\subsection{Preliminaries}

We begin by giving a semimartingale representation for the empirical means.
\[
m_N^{\pm}(t) := \frac{1}{N} \sum_{i=1}^N X^{\pm}_i(t),
\]
in terms of a family $n_i^+(ds,du)$, $n_i^-(ds,du)$ of independent Poisson Random Measures on $[0,+\infty)^2$ with intensity measure $ds \, du$.
\begin{lemma}  \label{lemma:mN}
The following semimartingale representations hold:
\begin{equation} \label{macsde}
\begin{split}
m_N^{\pm}(t) & = - \a \int_0^t m_N^{\pm}(s) ds + \int_{[0,t]\times [0,+\infty)} {\bf 1}_{[0,\l_N^{\pm}(s-))}(u) \frac{1}{N} \sum_{i=1}^N n_i^{\pm}(ds,du)+ a^{\pm}_N + b^{\pm}_N t \\ & = - \a \int_0^t m_N^{\pm}(s) ds  + \int_0^t \l_N^{\pm}(s)ds + a^{\pm}_N + b^{\pm}_N t+ M_N^{\pm}(t),
\end{split}
\end{equation}
where $\l_N^{\pm}$ are given in \eqref{def:lambda} and \eqref{def:lambdameno} and, letting $\tilde{n}_i^{\pm}(ds,du) = n_i^{\pm}(ds,du) - ds \, du$, 
\[
M_N^{\pm}(t) := \int_{[0,t]\times [0,+\infty)} {\bf 1}_{[0,\l_N^{\pm}(s-))}(u) \frac{1}{N} \sum_{i=1}^N \tilde{n}_i^{\pm}(ds,du)
\]
are orthogonal martingales with  conditional quadratic variations $\frac{1}{N} \int_0^t \l_N^{\pm}(s)ds$  (see \cite[Chapter III, Section 5, page 124]{Protter2005} for the definition of conditional quadratic variation). Moreover the processes $m_N^{\pm}(t)$ form a two dimensional Markov process.
\end{lemma}
\begin{proof}

Using our specific choice $\varphi(t)=e^{-\alpha t}$ , by Fubini's Theorem and the fact that $\varphi(t) = -\frac{1}{\a} \frac{d}{dt} \varphi(t)$:
\[
\begin{split}
\int_0^t X^{\pm}_i(s) ds & =  \int_0^t \int_{[0,s]} \varphi(s-r)\left[ dN^{\pm}_i(r) +\nu_N^{\pm}(dr)\right]ds = \int_{[0,t]} \left(\int_r^t \varphi(s-r) ds \right) \left[ dN^{\pm}_i(r) + \nu_N^{\pm}(dr)\right] \\
& =  \int_{[0,t]}\frac{1}{\a} \left(1-\varphi(t-r)\right) \left[ dN^{\pm}_i(r) + \nu_N^{\pm}(dr)\right]  = \frac{1}{\a}N^{\pm}_i(t) + \frac{1}{\a} \nu_N^{\pm}([0,t]) - \frac{1}{\a} X^{\pm}_i(t),
\end{split}
\]
and therefore
\begin{equation} \label{Markov}
X^{\pm}_i(t) =  - \a \int_0^t X^{\pm}_i(s) ds  + N^{\pm}_i(t) + \nu_N^{\pm}([0,t]).
\end{equation}
Note that this computation relies on our specific functional choice for $\varphi$, and not only on its asymptotics.

Considering a family $n_i^+(ds,du)$, $n_i^-(ds,du)$ of independent Poisson Random Measures on $[0,+\infty)^2$ with intensity measure $ds \, du$, \eqref{def:X} can be rewritten as
\begin{equation} \label{micsde}
X^{\pm}_i(t) = - \a \int_0^t X^{\pm}_i(s) ds + \int_{[0,t]\times [0,+\infty)} {\bf 1}_{[0,\l_N^{\pm}(s-))}(u) n_i^{\pm}(ds,du)+ a^{\pm}_N + b^{\pm}_N t,
\end{equation}
which can be aggregated to obtain \eqref{macsde}. 
 Finally, the Markov property of $m_N^{\pm}(t)$ follows from the representation \eqref{macsde}, the independence properties of Poisson Random Measures and the fact that the processes $\l_N^{\pm}(t)$ are deterministic functions of $m_N^{\pm}(t)$ (see \eqref{def:lambda} and \eqref{def:lambdameno}).
\end{proof}
\begin{remark}\label{rem:sde}
Note that equations \eqref{micsde} show that the trajectories of $X^{\pm}_i$ are uniquely determined by the realization of the points of the Poisson Random Measures $n_i^{\pm}$. Indeed, $X_i^{\pm}(t)$ evolves deterministically until the moving intervals $[0,\l_N^{\pm}(t))$ meet a point of one of the $n_i^{\pm}$, which causes a unit jump of $X_i^{\pm}(t)$.
\end{remark}

\subsection{Definition and convergence of the volatility process}
We now define
\[
Y_N(t) 
 = \sqrt{N} \left[\frac{1+\b\g}{1+\b} m_N^+(\sqrt{N}t) + \frac{\b(1+\b\g)}{1+\b} m_N^-(\sqrt{N}t)\right]
\]
and
\[
Z_N(t) 
 = \frac{(1-\g)\sqrt{N}}{2} \left[  m_N^+(\sqrt{N}t) -  m_N^-(\sqrt{N}t)\right].
\]
Note that
\[
\l_N^{+}(t) = f \left( \frac{1}{\sqrt{N}}Y_N(t/\sqrt{N}) +  \frac{1}{\sqrt{N}} \frac{2\b}{1+\b}Z_N(t/\sqrt{N}) \right)
\]
and
\[
\l_N^{-}(t) = f \left( \frac{1}{\sqrt{N}}Y_N(t/\sqrt{N}) -  \frac{1}{\sqrt{N}} \frac{2}{1+\b}Z_N(t/\sqrt{N}) \right)
\]
We recall (see \eqref{def:X}) that $m_N^{\pm}(t) \geq 0$, and therefore $Y_N(t) \geq 0$ (recall $\gamma \leq 1$). Thus
\[
|Z_N(t)| \leq \frac{(1-\g)\sqrt{N}}{2} \left[  m_N^+(\sqrt{N}t) +  m_N^-(\sqrt{N}t)\right]
\]
which implies the bound
\be{boundZ}
|Z_N(t)| \leq C(\b,\g) Y_N(t).
\ee
holds for some constant $C(\b,\g)$ depending on $\b$ and $\g$.
We are going to see, through a sequence of lemmas, that the process $Y_N$ has a nontrivial limit as $N \ra +\infty$, while $Z_N$ converges to zero in a suitable sense.
\begin{lemma} \label{lemma:Z1}
The following identity holds for all $t \geq 0$:
\be{eqZ_N}
\begin{split}
Z_N(t) & = E_N + F_N t  - \a \sqrt{N} \int_0^{t} Z_N(s) ds \\ & ~~~~~~+ \frac{1-\g}{2 \sqrt{N}} \int_{[0,\sqrt{N} t]} {\bf 1}_{[0,\l_N^+(s))}(u) \sum_{i=1}^N n_i^+(ds,du) - \frac{1-\g}{2 \sqrt{N}} \int_{[0,\sqrt{N} t]} {\bf 1}_{[0,\l_N^-(s))}(u) \sum_{i=1}^N n_i^-(ds,du),
\end{split}
\ee
where 
\[
E_N = \frac{1-\g}{2} \sqrt{N} [a_N^+ - a_N^-] \ \ \ F_N = \frac{1-\g}{2} \sqrt{N} [b_N^+ - b_N^-].
\]
\end{lemma}
\begin{proof}
This follows immediately from Lemma \ref{lemma:mN} and the definition of $Z_N$.
\end{proof}

Now a truncation argument is needed for later use. 
For a fixed but arbitrary $h>0$ define 
\[
\tY_N(t) := Y_N(t) \wedge h.
\]
Then, we define $\tZ_N$ to be the process such that \eqref{eqZ_N} holds for all $t \geq 0$ after replacing $Y_N$ and $Z_N$ by $\tY_N$ and $\tZ_N$ (these processes appear as arguments of $\l^{\pm}$). 
More precisely, given a realization of $n_i^{\pm}$, $i=1,\ldots,N$, which determines the trajectory of $Y_N$ and $\tY_N$, define $\tZ_N$ as the process that satisfies the equation
\be{eqtZ_N}
\begin{split}
\tZ_N(t) & = E_N + F_N t  - \a \sqrt{N} \int_0^{t} \tZ_N(s) ds \\ & ~~~~~~+ \frac{1-\g}{2 \sqrt{N}} \int_{[0,\sqrt{N} t]} {\bf 1}_{[0,\tilde{\l}_N^+(s))}(u) \sum_{i=1}^N n_i^+(ds,du) - \frac{1-\g}{2 \sqrt{N}} \int_{[0,\sqrt{N} t]} {\bf 1}_{[0,\tilde{\l}_N^-(s))}(u) \sum_{i=1}^N n_i^-(ds,du),
\end{split}
\ee
where 
\be{tl+}
\tilde{\l}_N^{+}(t) = f \left( \frac{1}{\sqrt{N}}\tY_N(t/\sqrt{N}) +  \frac{1}{\sqrt{N}} \frac{2\b}{1+\b}\tZ_N(t/\sqrt{N}) \right)
\ee
and
\be{tl-}
\tilde{\l}_N^{-}(t) = f \left( \frac{1}{\sqrt{N}}\tY_N(t/\sqrt{N}) -  \frac{1}{\sqrt{N}} \frac{2}{1+\b}\tZ_N(t/\sqrt{N}) \right)
\ee
Similarly to what is done in Remark \ref{rem:sde}, the trajectory of $\tZ_N$ can be reconstructed pathwise, solving the deterministic part of the equation between jumps, and performing the jumps induced by the points of the $n_i^{\pm}$'s and the moving intervals $[0,\tilde{\l}_N^{\pm}(s))$. Since the truncation may not guarantee the fact that the arguments of $f$ in \eqref{tl+} and \eqref{tl-} are positive, we extend $f$ by setting $f(x) = 0$ for $x<0$. The relevant outcome of this truncation is the boundedness of $\tY_N$, and the fact that $\tY_N(t) = Y_N(t)$ and $\tZ_N(t) = Z_N(t)$ up to the stopping time 
\[
\tau^N_h := \inf\{t>0 : Y_N(t) > h\}.
\]

It what follows $h>0$ is fixed but arbitrary, so all statement are meant to hold for all $h>0$. In the rest of the paper we will use extensively the compensated Poisson Random Measures
\be{comppoint}
\tilde{n}_i^{\pm}(dt,du)  := n_i^{\pm}(dt,du) - dt du.
\ee
\begin{lemma} \label{lemma:Z2}
The following bound holds for every $t>0$:
\be{collZ}
\sup_N N^{\frac12} \E\left[\tZ_N^4(t)\right] < +\infty.
\ee
\end{lemma}
\begin{proof}
It follows from \eqref{eqtZ_N} that  (see e.g. Theorem 31, Ch. II in \cite{Protter2005})
\be{tZ2}
\begin{split}
\tZ_N^4(t) & = E_N^4 + 4 F_N \int_0^t \tZ^2_N(s)ds - 4 \a \sqrt{N} \int_0^t \tZ_N^4(s)ds \\ & + \int_{[0,\sqrt{N} t]} {\bf 1}_{[0,\tilde{\l}_N^+(s))}(u) \left[\left(\tZ_N\left(\frac{s-}{\sqrt{N}}\right) + \frac{1-\g}{2 \sqrt{N}} \right)^4 - \tZ^4_N\left(\frac{s-}{\sqrt{N}}\right) \right]\sum_{i=1}^N n_i^+(ds,du) \\ & + \int_{[0,\sqrt{N} t]} {\bf 1}_{[0,\tilde{\l}_N^-(s))}(u) \left[\left(\tZ_N\left(\frac{s-}{\sqrt{N}}\right) - \frac{1-\g}{2 \sqrt{N}} \right)^4 - \tZ^4_N\left(\frac{s-}{\sqrt{N}}\right) \right] \sum_{i=1}^N n_i^-(ds,du)
\end{split}
\ee
Using then \eqref{comppoint} we obtain
\be{tZ2mart}
\begin{split}
\tZ_N^4(t) & = E_N^4 + 4 F_N \int_0^t \tZ^2_N(s)ds - 4 \a \sqrt{N} \int_0^t \tZ_N^4(s)ds \\ & + N \int_0^{\sqrt{N}t}\left[\frac{(1-\g)^4}{16N} + \frac{(1-\g)^3}{ 2 N^{\frac32}}\tZ_N\left(\frac{s}{\sqrt{N}}\right) + 3\frac{(1-\g)^2}{ 2 N}\tZ^2_N\left(\frac{s}{\sqrt{N}}\right) + 2\frac{(1-\g)}{ \sqrt{N}}\tZ^3_N\left(\frac{s}{\sqrt{N}}\right)\right] \\ & ~~~~~~~~~~~~~  f \left( \frac{1}{\sqrt{N}}\tY_N\left(\frac{s}{\sqrt{N}}\right) +  \frac{1}{\sqrt{N}}\frac{2\b}{1+\b}\tZ_N\left(\frac{s}{\sqrt{N}}\right) \right) ds \\ & + N \int_0^{\sqrt{N}t}\left[\frac{(1-\g)^4}{16N} - \frac{(1-\g)^3}{ 2 N^{\frac32}}\tZ_N\left(\frac{s}{\sqrt{N}}\right) + 3\frac{(1-\g)^2}{ 2 N}\tZ^2_N\left(\frac{s}{\sqrt{N}}\right) - 2\frac{(1-\g)}{ \sqrt{N}}\tZ^3_N\left(\frac{s}{\sqrt{N}}\right)\right] \\ & ~~~~~~~~~~~~~  f \left( \frac{1}{\sqrt{N}}\tY_N\left(\frac{s}{\sqrt{N}}\right) -  \frac{1}{\sqrt{N}}\frac{2\b}{1+\b}\tZ_N\left(\frac{s}{\sqrt{N}}\right) \right) ds  \\ & + M_N^{\tZ^4}(t) \\ & = E_N^4 + 4 F_N \int_0^t \tZ^2_N(s)ds - 4 \a \sqrt{N} \int_0^t \tZ_N^4(s)ds \\ & + N^{\frac32} \int_0^{t}\left[\frac{(1-\g)^4}{16N} + \frac{(1-\g)^3}{ 2 N^{\frac32}}\tZ_N\left(s\right) + 3\frac{(1-\g)^2}{ 2 N}\tZ^2_N\left(s\right) + 2\frac{(1-\g)}{ \sqrt{N}}\tZ^3_N\left(s\right)\right] \\ & ~~~~~~~~~~~~~  f \left( \frac{1}{\sqrt{N}}\tY_N\left(s\right) +  \frac{1}{\sqrt{N}}\frac{2\b}{1+\b}\tZ_N\left(s\right) \right) ds \\ & + N^{\frac32} \int_0^{t}\left[\frac{(1-\g)^4}{16N} - \frac{(1-\g)^3}{ 2 N^{\frac32}}\tZ_N\left(s\right) + 3\frac{(1-\g)^2}{ 2 N}\tZ^2_N\left(s\right) - 2\frac{(1-\g)}{ \sqrt{N}}\tZ^3_N\left(s\right)\right] \\ & ~~~~~~~~~~~~~  f \left( \frac{1}{\sqrt{N}}\tY_N\left(s\right) -  \frac{1}{\sqrt{N}}\frac{2\b}{1+\b}\tZ_N\left(s\right) \right) ds \\ & + M_N^{\tZ^4}(t),
\end{split}
\ee
where 
\[
\begin{split}
M_N^{\tZ^4} = &  \int_{[0,\sqrt{N} t]} {\bf 1}_{[0,\tilde{\l}_N^+(s))}(u) \left[\left(\tZ_N\left(\frac{s-}{\sqrt{N}}\right) + \frac{1-\g}{2 \sqrt{N}} \right)^4 - \tZ^4_N\left(\frac{s-}{\sqrt{N}}\right) \right]\sum_{i=1}^N \tilde{n}_i^+(ds,du) \\ & + \int_{[0,\sqrt{N} t]} {\bf 1}_{[0,\tilde{\l}_N^-(s))}(u) \left[\left(\tZ_N\left(\frac{s-}{\sqrt{N}}\right) - \frac{1-\g}{2 \sqrt{N}} \right)^4 - \tZ^4_N\left(\frac{s-}{\sqrt{N}}\right) \right] \sum_{i=1}^N \tilde{n}_i^-(ds,du)
\end{split}
\]
is a mean zero martingale. Taking the expectation in \eqref{tZ2mart} we obtain
\be{tZ2der}
\begin{split}
\frac{d}{dt} \E\left[\tZ_N^4(t)\right] &  =  4 F_N \E\left[\tZ_N(t)\right] -4\a \sqrt{N} \E\left[\tZ_N^4(t)\right] \\ & + N^{3/2}\E\left[\left[\frac{(1-\g)^4}{16N^2} + \frac{3(1-\g)^2}{ 2N}\tZ^2_N\left(t\right)\right] \left[ f \left( \frac{1}{\sqrt{N}}\tY_N\left(t\right) +  \frac{1}{\sqrt{N}}\frac{2\b}{1+\b}\tZ_N\left(s\right) \right)  \right. \right. \\ & ~~~~~~~~~~~~~~~~~~~~~~~~~~~~~~~~~~~~~~~~~~~~~~~~~~~~~~\left. \left. + f \left( \frac{1}{\sqrt{N}}\tY_N\left(t\right) -  \frac{1}{\sqrt{N}}\frac{2}{1+\b}\tZ_N\left(s\right) \right)\right] \right]  \\ & + N^{3/2}\E\left[\left[\frac{(1-\g)^3}{2N^{\frac32}}\tZ_N + \frac{2(1-\g)}{ \sqrt{N}}\tZ^3_N\left(t\right)\right] \left[ f \left( \frac{1}{\sqrt{N}}\tY_N\left(t\right) +  \frac{1}{\sqrt{N}}\frac{2\b}{1+\b}\tZ_N\left(s\right) \right)  \right. \right. \\ & ~~~~~~~~~~~~~~~~~~~~~~~~~~~~~~~~~~~~~~~~~~~~~~~~~~~~~~\left. \left. - f \left( \frac{1}{\sqrt{N}}\tY_N\left(t\right) -  \frac{1}{\sqrt{N}}\frac{2}{1+\b}\tZ_N\left(s\right) \right)\right] \right] 
\end{split}
\ee
Using the fact that $f$ is concave for $x>0$ we have that $f(x) \leq f'(0)|x|$ for all $x \in \R$. So
\begin{multline} \label{conc}
 f \left( \frac{1}{\sqrt{N}}\tY_N\left(t\right) +  \frac{1}{\sqrt{N}}\frac{2\b}{1+\b}\tZ_N\left(t\right) \right) +  f \left( \frac{1}{\sqrt{N}}\tY_N\left(t\right) -  \frac{1}{\sqrt{N}}\frac{2}{1+\b}\tZ_N\left(t\right) \right)
 \\ \leq \frac{2}{\sqrt{N}}f'(0)\left[ \tY_N(t) + 2 |\tZ_N(t)| \right].
 \end{multline}
Moreover $f$ is Lipschitz with constant $f'(0)$, so
\begin{equation} \label{Lip}
 \left| f \left( \frac{1}{\sqrt{N}}\tY_N\left(t\right) +  \frac{1}{\sqrt{N}}\frac{2\b}{1+\b}\tZ_N\left(t\right) \right) -  f \left( \frac{1}{\sqrt{N}}\tY_N\left(t\right) -  \frac{1}{\sqrt{N}}\frac{2}{1+\b}\tZ_N\left(t\right) \right) \right| 
 \leq \frac{2 f'(0)}{\sqrt{N}} |\tZ_N(t)|.
\end{equation}
Summing all up we obtain
\be{tZ2derest}
\begin{split}
\frac{d}{dt} \E\left[\tZ_N^4(t)\right]  \leq  & - 4[\a-(1-\g) f'(0)] \sqrt{N} \E\left[\tZ_N^4(t)\right] \\ &
+ [4F_N + 6(1-\g)^2 f'(0)]\E\left[|\tZ_N(t)|^3\right] + 3 (1-\g)^2 f'(0) \E\left[ \tZ_N^2(t) \tY_N(y) \right] \\ & + (1-\g)^3 \frac{f'(0)}{\sqrt{N}} \E\left[\tZ_N^2(t)\right] + \frac{(1-\g)^4}{8N} f'(0)\E\left[2|\tZ_N| +  \tY_N(t)\right].
\end{split}
\ee
Using the fact that the sequence $F_N$ is bounded (see \eqref{assab}), that $\tY_N(t) \leq h$  and that $|\tZ_N|^i \leq 1 + |\tZ_N|^3$ for $i=1,2$, we have that 
\be{tZ2derest2}
\begin{split}
\frac{d}{dt} \E\left[\tZ_N^4(t)\right]    & \leq   A + B\E\left[ |\tZ_N|^3 \right] - 4[\a-(1-\g) f'(0)] \sqrt{N} \E\left[\tZ_N^4(t)\right]  \\ & \leq A + B\left(\E\left[ |\tZ_N|^4 \right]\right)^{\frac34} - 4[\a-(1-\g) f'(0)] \sqrt{N} \E\left[\tZ_N^4(t)\right],
\end{split}
\ee
for suitable constants $A,B>0$, where in the last step we have used Jensen's inequality. 
Thus $x_N(t) := \E\left[\tZ_N^4(t)\right] $ 
satisfies the differential inequality:
\be{}
\frac{d}{dt}x_N(t)  \leq A + B x^{\frac34}_N(t) - 2[\a-(1-\g) f'(0)] \sqrt{N} x_N(t).
\ee
Considering also that $x_N(0) = E_N^4$ which is bounded by a constant $C$ (see \eqref{assab}), we have, by the comparison theorem for differential equations, that $x_N(t) \leq y_N(t)$, where 
\[
\begin{split}
\frac{d}{dt}y_N(t) & = A + B y^{\frac34}_N(t) - 2[\a-(1-\g) f'(0)] \sqrt{N} y_N(t) \\
y_N(0) & = C.
\end{split}
\]
From \eqref{intro:critical}, observing that $\a-(1-\g) f'(0) = (1+\b\g) f'(0) -(1-\g) f'(0) = \b(1+\g)f'(0) >0$, the conclusion now follows from Lemma \ref{lemma:collapsing}.
\end{proof}

From the pointwise estimate \eqref{collZ} we can derive a uniform estimate by using Theorem \ref{collapsing}. This estimate is not needed for the convergence of the volatility process, but will be used for proving the convergence of the price.
\begin{lemma} \label{lemma:Z3}
For all $\e>0$,
\be{collZpr}
\sup_{t \in [\e,T]}  \tZ_N^2(t) \ \longrightarrow \ 0 \ \ \mbox{in probability and in $L^1$, as $N \rightarrow +\infty$}
\ee
\end{lemma}
\begin{proof}
First we identify the index $n$ appearing in Theorem \ref{collapsing} with $N^{\frac{3}{16}}$, so all processes are indexed by $N$ rather than $n$. Set $X_N(t) = \tZ_N^2(t)$. By \eqref{eqtZ_N} following the argument that led to \eqref{tZ2}, we get
\be{tZ2bis}
\begin{split}
\tZ_N^2(t) & = E_N^2 + 2 F_N \int_0^t \tZ_N(s)ds - 2 \a \sqrt{N} \int_0^t \tZ_N^2(s)ds \\ & + \int_{[0,\sqrt{N} t]} {\bf 1}_{[0,\tilde{\l}_N^+(s))}(u) \left[\left(\tZ_N\left(\frac{s-}{\sqrt{N}}\right) + \frac{1-\g}{2 \sqrt{N}} \right)^2 - \tZ^2_N\left(\frac{s-}{\sqrt{N}}\right) \right]\sum_{i=1}^N n_i^+(ds,du) \\ & + \int_{[0,\sqrt{N} t]} {\bf 1}_{[0,\tilde{\l}_N^-(s))}(u) \left[\left(\tZ_N\left(\frac{s-}{\sqrt{N}}\right) - \frac{1-\g}{2 \sqrt{N}} \right)^2 - \tZ^2_N\left(\frac{s-}{\sqrt{N}}\right) \right] \sum_{i=1}^N n_i^-(ds,du) \\
&= E_N^2 + 2 F_N \int_0^t \tZ_N(s)ds - 2 \a \sqrt{N} \int_0^t \tZ_N^2(s)ds \\ & + \int_{[0,t]} {\bf 1}_{[0,\tilde{\l}_N^+(\sqrt{N}s))}(u) \left[\left(\tZ_N\left(s-\right) + \frac{1-\g}{2 \sqrt{N}} \right)^2 - \tZ^2_N\left(s-\right) \right]\sum_{i=1}^N n_i^+(\sqrt{N}ds,du) \\ & + \int_{[0, t]} {\bf 1}_{[0,\tilde{\l}_N^-(\sqrt{N}s))}(u) \left[\left(\tZ_N\left(s-\right) - \frac{1-\g}{2 \sqrt{N}} \right)^2 - \tZ^2_N\left(s-\right) \right] \sum_{i=1}^N n_i^-(\sqrt{N}ds,du)
\end{split}
\ee
This can be rewritten in the form
\begin{equation} \label{Xcoll} dX_N(t) = S_N(t)dt + \int_{[0,t]\times \mathcal{Y}} f_N(s-, y) [\Lambda_N(ds,dy) - A_N(s,dy)ds]\end{equation}
with 
\[
\mathcal{Y} = [0,+\infty) \times \{+,-\}, \ \ \Lambda_N(ds,du, \pm) = 
\sum_{i=1}^N n^{\pm}_i(\sqrt{N}ds,du), \ \ A_N(s,du,\pm) = N^{\frac32} du
\]
\begin{equation} \label{effeN}
f_N(s,u,\pm) =  {\bf 1}_{[0, \tilde{\lambda}^{\pm}(\sqrt{N} s))}(u) \left[ \left(\tZ_N(s) \pm \frac{1-\g}{2\sqrt{N}} \right)^2 - \tZ_N^2(s) \right],
\end{equation}
\begin{equation} \label{esseN}
\begin{split}
S_N(t) = 2 F_N \tZ_N(t) - 2 \a \sqrt{N} \tZ_N^2(t) & + N^{\frac32}\tilde{\l}_N^{+}(\sqrt{N}t)\left[\left(\tZ_N\left(s-\right) + \frac{1-\g}{2 \sqrt{N}} \right)^2 - \tZ^2_N\left(s-\right) \right] \\ & + N^{\frac32}\tilde{\l}_N^{-}(\sqrt{N}t)\left[\left(\tZ_N\left(s-\right) - \frac{1-\g}{2 \sqrt{N}} \right)^2 - \tZ^2_N\left(s-\right) \right].
\end{split}
\end{equation}
We now verify that conditions in Theorem \ref{collapsing} hold, with the only difference that the initial time is $\e>0$ rather than zero.  This is due to the fact that we will use \eqref{collZ}, valid for $t>0$ only, to control the initial conditions (see \eqref{condc2}). We set $d=2$, $\a_N = N^{\frac18}$, $\beta_N = 1$, and $\tau_N = \tau_h^N$. With these choices requirements \eqref{condc1} are satisfied. Now, using Lemma \ref{lemma:Z2} we have
\[
\mathbb{E}\left[X_N^d(\e)\right] = \mathbb{E}\left[\tZ_N^4(\e)\right] \leq  A N^{-\frac12} \leq B N^{-\frac14} = B \a_N^{-d}
\]
for some constants $A$ and $B$, so condition \eqref{condc2} is checked. 
For condition \eqref{condc3}, we make estimates similar to those in Lemma \ref{lemma:Z2}.  Recalling that
\[
\tilde{\l}_N^{+}(\sqrt{N}t) = f \left( \frac{1}{\sqrt{N}}\tY_N(t) +  \frac{1}{\sqrt{N}} \frac{2\b}{1+\b}\tZ_N(t) \right)
\]
and
\[
\tilde{\l}_N^{-}(\sqrt{N}t) = f \left( \frac{1}{\sqrt{N}}\tY_N(t) -  \frac{1}{\sqrt{N}} \frac{2}{1+\b}\tZ_N(t) \right),
\]
we have
\begin{multline*}
N^{\frac32}\tilde{\l}_N^{+}(\sqrt{N}t)\left[\left(\tZ_N\left(s-\right) + \frac{1-\g}{2 \sqrt{N}} \right)^2 - \tZ^2_N\left(s-\right) \right]  + N^{\frac32}\tilde{\l}_N^{-}(\sqrt{N}t)\left[\left(\tZ_N\left(s-\right) - \frac{1-\g}{2 \sqrt{N}} \right)^2 - \tZ^2_N\left(s-\right) \right] \\ = N^{\frac32} \frac{(1-\g)^2}{4N} \left[f \left( \frac{1}{\sqrt{N}}\tY_N(t) +  \frac{1}{\sqrt{N}} \frac{2\b}{1+\b}\tZ_N(t) \right) + f \left( \frac{1}{\sqrt{N}}\tY_N(t) -  \frac{1}{\sqrt{N}} \frac{2}{1+\b}\tZ_N(t) \right) \right] \\
+ N^{\frac32} \frac{1-\g}{\sqrt{N}} \tZ_N(t) \left[f \left( \frac{1}{\sqrt{N}}\tY_N(t) +  \frac{1}{\sqrt{N}} \frac{2\b}{1+\b}\tZ_N(t) \right) - f \left( \frac{1}{\sqrt{N}}\tY_N(t) -  \frac{1}{\sqrt{N}} \frac{2}{1+\b}\tZ_N(t) \right) \right] \\ \leq \frac{(1-\g)^2}{2} f'(0)\left[ \tY_N(t) + 2 |\tZ_N(t)| \right] + 2f'(0)(1-\g) \sqrt{N} \tZ^2_N(t),
\end{multline*}
where we have used \eqref{conc} and \eqref{Lip}. Summing up
\begin{equation} \label{estS_N}
S_N(t) \leq 2 F_N \tZ_N(t) + \frac{(1-\g)^2}{2} f'(0)\left[ \tY_N(t)+ 2 |\tZ_N(t)| \right] - 2 (\a-f'(0)(1-\g))  \sqrt{N} \tZ_N^2(t).
\end{equation}
We observe that for $t < \tau_h^N$ we have $\tY_N(y) = Y_N(t) \leq h$ and, by \eqref{boundZ},
\be{esttZ}
|\tZ_N(t)| = |Z_N(t)| \leq C(\b,\g) Y_N(t) \leq C(\b,\g)h.
\ee
So condition \eqref{condc3} follows from \eqref{estS_N},  and the facts that $\a-f'(0)(1-\g)>0$ and $\sqrt{N} \geq n = N^{\frac{3}{16}}$. We now consider conditions \eqref{condc4} and \eqref{condc5}. Note that, by \eqref{effeN}, 
\[
|f_N(s,u,\pm)| \leq \frac{A}{\sqrt{N}} \leq B \a_N^{-1}
\]
for some constants $A$ and $B$; moreover
\[
\begin{split}
\int_{\mathcal{Y}} (f_N(t,y))^2 A_N(t,dy)  = &  N^{\frac32}\tilde{\lambda}^+ (\sqrt{N} t) \left[ \left(\tZ_N(s) + \frac{1-\g}{2\sqrt{N}} \right)^2 - \tZ_N^2(s) \right]^2 \\
& + N^{\frac32}\tilde{\lambda}^- (\sqrt{N} t) \left[ \left(\tZ_N(s) - \frac{1-\g}{2\sqrt{N}} \right)^2 - \tZ_N^2(s) \right]^2  \leq A
\end{split}
\]
for some constant $A$, as, for some $C>0$, 
\[
\tilde{\lambda}^{\pm} (\sqrt{N} t) \leq \frac{C}{\sqrt{N}}
\]
by definition of $\tilde{\lambda}^{\pm}$ and boundedness of $\tZ_N$ and $\tY_N$ up to time $\tau_N^h$, and
\[
\left[ \left(\tZ_N(s) \pm \frac{1-\g}{2\sqrt{N}} \right)^2 - \tZ_N^2(s) \right]^2 \leq \frac{C}{N}.
\]
Thus all conditions in Theorem \ref{collapsing} are verified, and the conclusion follows.

\end{proof}

Now we are ready to prove the part of Theorem \ref{th:main} which concerns the volatility process. More explicitly, we prove the following result.

\begin{proposition} \label{th:volatility}
Suppose the criticality condition 
\be{critical}
\a = f'(0)(1+\b\g)
\ee
holds. Then the sequence $(Y_N(t))_{t \in [0,T]}$ converges in distribution to the unique solution of the SDE
\be{limvol}
\begin{split}
dy(t)  & = \left[\frac{1+\b\g}{1+\b} b^+ + \b\frac{1+\b\g}{1+\b} b^- + \frac12 (1+\b\g) f''(0) y^2(t) \right] dt  \\ & ~~~~~ + \frac{\sqrt{f'(0) (1+\b^2)}(1+\b\g)}{1+\b} \sqrt{y(t)} dB(t) \\
y(0) & = \frac{1+\b\g}{1+\b} a^+ + \b\frac{1+\b\g}{1+\b} a^-,
\end{split}
\ee
where $B$ is a simple Brownian motion.
\end{proposition}
\begin{proof} Recall $M^\pm_N$ in Lemma \ref{lemma:mN}
Using \eqref{macsde} we obtain
\be{Y}
\begin{split}
Y_N(t) & =  \sqrt{N} \left[\frac{1+\b\g}{1+\b} m_N^+(\sqrt{N}t) + \b \frac{1+\b\g}{1+\b} m_N^-(\sqrt{N}t)\right] \\
& =  \sqrt{N} \left[ - \a \int_0^{\sqrt{N} t} \frac{1+\b\g}{1+\b}m_N^+(s) + \b \frac{1+\b\g}{1+\b} m_N^-(s)  ds  \right]\\
& ~~~ + \sqrt{N} \left[\frac{1+\b\g}{1+\b} \int_0^{\sqrt{N} t} f\left(\frac{1}{\sqrt{N}}Y_N\left(s/\sqrt{N}\right) +  \frac{1}{\sqrt{N}}\frac{2\b}{1+\b}Z_N\left(s/\sqrt{N}\right) \right) ds \right. \\
& ~~~ + \left. \b \frac{1+\b\g}{1+\b}  \int_0^{\sqrt{N} t} f\left( \frac{1}{\sqrt{N}}Y_N(s/\sqrt{N}) -  \frac{1}{\sqrt{N}} \frac{2}{1+\b}Z_N(s/\sqrt{N}) \right) ds \right] \\
& ~~~ +  \sqrt{N} \left[ \frac{1+\b\g}{1+\b} \left(a_N^+ + b_N^+ t \right) + \b \frac{1+\b\g}{1+\b} \left(a_N^- + b_N^- t \right) \right] \\
& ~~~ +  \sqrt{N} \left[  \frac{1+\b\g}{1+\b}M_N^+(\sqrt{N} t) + \b \frac{1+\b\g}{1+\b} M_N^-(\sqrt{N} t) \right] \\
& = C_N + D_N t  -\a \sqrt{N} \int_0^t Y_N(s)  ds  \\ & ~~~+ N \left[ \frac{1+\b\g}{1+\b} \int_0^{ t} f\left(\frac{1}{\sqrt{N}}Y_N\left(s\right) +  \frac{1}{\sqrt{N}}\frac{2\b}{1+\b}Z_N\left(s\right)  \right) ds \right. \\
& ~~~ + \left. \b\frac{1+\b\g}{1+\b}  \int_0^{ t} f\left( \frac{1}{\sqrt{N}}Y_N(s) -  \frac{1}{\sqrt{N}} \frac{2}{1+\b}Z_N(s)  \right) ds \right]  \\ & ~~~ + M_N^Y(t)
\end{split}
\ee
where 
\[
\begin{split}
C_N & =   \sqrt{N} \left[ \frac{1+\b\g}{1+\b} a_N^+ + \b \frac{1+\b\g}{1+\b} a_N^- \right]  \\ D_N & =  \sqrt{N} \left[ \frac{1+\b\g}{1+\b} b_N^+ + \b \frac{1+\b\g}{1+\b} b_N^- \right],
\end{split}
\]

\[
M_N^{\pm}(t) := \int_{[0,t]\times [0,+\infty)} {\bf 1}_{[0,\l^{\pm}(s-))}(u) \frac{1}{N} \sum_{i=1}^N \tilde{n}_i^{\pm}(ds,du)
\]
and
\[
M_N^Y(t) =  \sqrt{N} \left[  \frac{1+\b\g}{1+\b}M_N^+(\sqrt{N} t) + \b \frac{1+\b\g}{1+\b} M_N^-(\sqrt{N} t) \right]
\]
 is a martingale with conditional quadratic variation
\begin{multline} \label{quadvarMY}
\sqrt{N} \left[ \left(\frac{1+\b\g}{1+\b}\right)^2\int_0^t  f\left(\frac{1}{\sqrt{N}}Y_N\left(s\right) +  \frac{1}{\sqrt{N}}\frac{2\b}{1+\b}Z_N\left(s\right) \right) ds \right. \\  
\left.+ \left(\b\frac{1+\b\g}{1+\b}\right) ^2\int_0^t  f\left( \frac{1}{\sqrt{N}}Y_N(s) -  \frac{1}{\sqrt{N}} \frac{2}{1+\b}Z_N(s)  \right) ds \right].
\end{multline}
We now use the Taylor expansion
\be{taylor}
f(u) = f'(0) u + \frac12 f''(0) u^2 + R(u) u^3,
\ee
where $R(u)$ is a  bounded function (the boundedness of $R(u)$ follows from the boundedness of the third derivative). After a simple computation, we rewrite \eqref{Y} as
\be{Y2}
\begin{split}
Y_N(t) & = C_N + D_N t + \sqrt{N} \int_0^t [-\a + f'(0)(1+\b \g)] Y_N(s) ds  \\
& ~~~ + \frac12(1+\b\g) f''(0) \int_0^t Y_N^2(s) ds + A(\b,\g) f''(0) \int_0^t Y_N(s)Z_N(s) ds + B(\b,\g) f''(0) \int_0^t Z_N^2(s) ds  \\
& ~~~ + \frac{1}{\sqrt{N}} C(\b,\g) \int_0^t R\left(\frac{1}{\sqrt{N}}Y_N\left(s\right) +  \frac{1}{\sqrt{N}}\frac{2\b}{1+\b}Z_N\left(s\right)  \right) \left(\frac{1}{\sqrt{N}}Y_N\left(s\right) +  \frac{1}{\sqrt{N}}\frac{2\b}{1+\b}Z_N\left(s\right)  \right)^3 ds \\ & ~~~ + \frac{1}{\sqrt{N}}D(\b,\g)  \int_0^t R\left( \frac{1}{\sqrt{N}}Y_N(s) -  \frac{1}{\sqrt{N}} \frac{2}{1+\b}Z_N(s) \right) \left( \frac{1}{\sqrt{N}}Y_N(s) -  \frac{1}{\sqrt{N}} \frac{2}{1+\b}Z_N(s) \right)^3 ds \\ & ~~~ +  M_N^Y(t)
\end{split}
\ee
where $A(\b,\g)$, $B(\b,\g)$, $C(\b,\g)$ and $D(\b,\g)$ are constants depending on $\b,\g$, whose precise value will not be relevant.
It should be remarked that in this expansion the term of order $\sqrt{N} \int_0^T Z_N(s) ds$ has zero coefficient: this is the main motivation for the specific choice of the combinators in the definition of $Y_N$ and $Z_N$.

We are now ready to 
use Theorem \ref{thm:diffusion:approx}, for the one-dimensional process $Y_N(t)$. We choose 
\be{B}
\begin{split}
B_N(t) &  := C_N + D_N t \\ & ~~~ + \frac12(1+\b\g) f''(0) \int_0^t Y_N^2(s) ds + A(\b,\g) f''(0) \int_0^t Y_N(s)Z_N(s) ds + B(\b,\g) f''(0) \int_0^t Z_N^2(s) ds  \\
& ~~~ + \frac{1}{\sqrt{N}} C(\b,\g) \int_0^t R\left(\frac{1}{\sqrt{N}}Y_N\left(s\right) +  \frac{1}{\sqrt{N}}\frac{2\b}{1+\b}Z_N\left(s\right)  \right) \left(\frac{1}{\sqrt{N}}Y_N\left(s\right) +  \frac{1}{\sqrt{N}}\frac{2\b}{1+\b}Z_N\left(s\right)  \right)^3 ds \\ & ~~~ + \frac{1}{\sqrt{N}}D(\b,\g)  \int_0^t R\left( \frac{1}{\sqrt{N}}Y_N(s) -  \frac{1}{\sqrt{N}} \frac{2}{1+\b}Z_N(s) \right) \left( \frac{1}{\sqrt{N}}Y_N(s) -  \frac{1}{\sqrt{N}} \frac{2}{1+\b}Z_N(s) \right)^3 ds,
\end{split}
\ee
\be{b}
b(x) := \left[2\frac{1+\b\g}{1+\b} b^+ + 2\b \frac{1+\b\g}{1+\b} b^- + \frac12 (1+\b\g) f''(0) x^2 \right],
\ee
\be{A}
\begin{split}
A_N(t) & := \frac{ \sqrt{N} }{4}\left[ \left(\frac{2}{1+\b}+\frac{2\b}{1+\b}\g\right)^2\int_0^t  f\left(\frac{1}{\sqrt{N}}Y_N\left(s\right) +  \frac{1}{\sqrt{N}}\frac{2\b}{1+\b}Z_N\left(s\right) \right) ds \right. \\ & ~~~~~~~ 
\left.+ \left(\frac{2\b}{1+\b}(1-\g) + 2\b\g\right) ^2\int_0^t  f\left( \frac{1}{\sqrt{N}}Y_N(s) -  \frac{1}{\sqrt{N}} \frac{2}{1+\b}Z_N(s)  \right) ds \right],
\end{split}
\ee
\be{a}
a(x) :=  \frac{\a (1+\b^2)(1+\b\g)}{(1+\b)^2}|x|.
\ee
By   \eqref{Y2} it follows that 
\[
Y_N - B_N = M_N^Y
\]
 and therefore is a martingale that, by \eqref{quadvarMY}, has  conditional quadratic variation $A_N$, i.e. $(M_N^Y)^2 - A_N$ is a martingale, as required in Theorem \ref{thm:diffusion:approx}. 

 Conditions \eqref{eq:hyp4.4}, \eqref{eq:hyp4.5} and \eqref{eq:hyp4.3} are trivial, since $A^N$ and $B^N$ are continuous in $t$ and the jumps of $Y_N$ have order $\frac{1}{\sqrt{N}}$:  indeed, by \eqref{macsde}, the jumps of $m_N^{\pm}$ are of size $\frac{1}{N}$, since simultaneous jumps are not allowed for independent Poisson processes. By Theorem \ref{thm:eq.quad} the SDE \eqref{limvol} has a unique solution, and 
\[
\lim_{N \ra +\infty} Y_N(0) = \lim_{N \ra +\infty}  C_N  = y(0).
\]
Thus, to complete the proof, we need to establish \eqref{eq:hyp4.7} and \eqref{eq:hyp4.6}. We begin by \eqref{eq:hyp4.6}. Note that 
\begin{multline} \label{estB}
B_N(t) - \int_0^t b(Y_N(s)) ds   = \left[ D_N - 2\frac{1+\b\g}{1+\b} b^+ - 2\b \frac{1+\b\g}{1+\b} b^- \right]t \\ +
A(\b,\g) f''(0) \int_0^t Y_N(s)Z_N(s) ds + B(\b,\g) f''(0) \int_0^t Z_N^2(s) ds \\
 + \frac{1}{\sqrt{N}} C(\b,\g) \int_0^t R\left(\frac{1}{\sqrt{N}}Y_N\left(s\right) +  \frac{1}{\sqrt{N}}\frac{2\b}{1+\b}Z_N\left(s\right)  \right) \left(\frac{1}{\sqrt{N}}Y_N\left(s\right) +  \frac{1}{\sqrt{N}}\frac{2\b}{1+\b}Z_N\left(s\right)  \right)^3 ds \\ + \frac{1}{\sqrt{N}}D(\b,\g)  \int_0^t R\left( \frac{1}{\sqrt{N}}Y_N(s) -  \frac{1}{\sqrt{N}} \frac{2}{1+\b}Z_N(s) \right) \left( \frac{1}{\sqrt{N}}Y_N(s) -  \frac{1}{\sqrt{N}} \frac{2}{1+\b}Z_N(s) \right)^3 ds
\end{multline}
To prove \eqref{eq:hyp4.6} we show that all terms in the r.h.s of \eqref{estB} converge to zero in probability after replacing $t$ by $t \wedge  \tau^N_h$, with  $\tau^N_h := \inf\{t>0 : Y_N(t) > h\}$. The first term is easy, since 
\[
\lim_{N \ra +\infty} D_N  = 2\frac{1+\b\g}{1+\b} b^+ + 2\b \frac{1+\b\g}{1+\b} b^-
\]
by \eqref{assab}.
The estimate of the next two terms in \eqref{estB} is based on \eqref{esttZ}, which gives
\[
\left|A(\b,\g) f''(0) \int_0^{t\wedge \tau^N_h} Y_N(s)Z_N(s) ds + B(\b,\g) f''(0) \int_0^{t\wedge \tau^N_h} Z_N^2(s) ds \right| \leq \mbox{constant} \cdot \int_0^t \tZ_N^2(s)\wedge  C^2(\b,\g) h^2 ds,
\]
where $C(\b,\g)$ is here the constant in \eqref{esttZ}.
This last term, by Lemma \ref{lemma:Z2} and dominated convergence, converges to zero in $L^1$ and so in probability. For the remaining two terms in \eqref{estB} we use again \eqref{boundZ},
which implies that
\be{estRem}
\int_0^{t \wedge \tau^N_h} R\left(\frac{1}{\sqrt{N}}Y_N\left(s\right) +  \frac{1}{\sqrt{N}}\frac{2\b}{1+\b}Z_N\left(s\right)  \right) \left(\frac{1}{\sqrt{N}}Y_N\left(s\right) +  \frac{1}{\sqrt{N}}\frac{2\b}{1+\b}Z_N\left(s\right)  \right)^3 ds \leq C
\ee
for some constant $C$. This implies that the first cubic term in \eqref{estB} goes to zero uniformly; the second term is similar and dealt with in the same way. This completes the proof of \eqref{eq:hyp4.6}. The proof of \eqref{eq:hyp4.7} is similar but easier, as it requires only a first order expansion for $f$
\end{proof}

\subsection{Convergence of the joint price-volatility process} \label{sec:price}
In this subsection we complete the proof of Theorem \ref{th:main}.
Let us recall the rescaled log-price in \eqref{def:pn}
\[
\begin{split}
\Pi_N(t)  = \frac{1}{\sqrt{N}} \sum_{i=1}^N P_i(\sqrt{N}t) & = \frac{1}{\sqrt{N}} \sum_{i=1}^N \left[ \int_{[0,\sqrt{N}t]\times [0,+\infty)} {\bf 1}_{[0,\l_N^{+}(s-))}(u) n_i^{+}(ds,du) \right. \\ & ~~~ \left. -  \int_{[0,\sqrt{N}t]\times [0,+\infty)} {\bf 1}_{[0,\l_N^{-}(s-))}(u) n_i^{-}(ds,du) \right],
\end{split}
\]
where, for the last equality, we use as in \eqref{macsde} the representation
\[
N_i^{\pm}(t) = \int_{[0,t]\times [0,+\infty)} {\bf 1}_{[0,\l_N^{\pm}(s-))}(u) \frac{1}{N} \sum_{i=1}^N n_i^{\pm}(ds,du).
\]
Compensating the Poisson Processes  as in \eqref{comppoint} we obtain
\be{price}
\begin{split}
\Pi_N(t)  & = N \left[ \int_0^{ t} f\left(\frac{1}{\sqrt{N}}Y_N\left(s\right) +  \frac{1}{\sqrt{N}}\frac{2\b}{1+\b}Z_N\left(s\right)  \right) ds \right. \\
& ~~~ - \left.   \int_0^{ t} f\left( \frac{1}{\sqrt{N}}Y_N(s) -  \frac{1}{\sqrt{N}} \frac{2}{1+\b}Z_N(s)  \right) ds \right] + M_N^{\Pi}(t),
\end{split}
\ee
where 
\[
M_N^{\Pi}(t) = \frac{1}{\sqrt{N}} \sum_{i=1}^N \left[ \int_{[0,\sqrt{N}t]\times [0,+\infty)} {\bf 1}_{[0,\l_N^{+}(s-))}(u) \tilde{n}_i^{+}(ds,du) -  \int_{[0,\sqrt{N}t]\times [0,+\infty)} {\bf 1}_{[0,\l_N^{-}(s-))}(u) \tilde{n}_i^{-}(ds,du) \right]
\]
 with $\tilde{n}_i^{\pm}(ds,du) = n_i^{\pm}(ds,du) - ds \, du$, is a martingale with zero mean and conditional quadratic variation 
\be{pricequadvar}
\sqrt{N} \left[ \int_0^{ t} f\left(\frac{1}{\sqrt{N}}Y_N\left(s\right) +  \frac{1}{\sqrt{N}}\frac{2\b}{1+\b}Z_N\left(s\right)  \right) ds + \int_0^{ t} f\left( \frac{1}{\sqrt{N}}Y_N(s) -  \frac{1}{\sqrt{N}} \frac{2}{1+\b}Z_N(s)  \right) ds \right] .
\ee
We can also easily identify the covariation between the martingales $M_N^{\Pi}$ and $M_N^{Y}$, which is given by
\begin{multline} \label{quadcovar}
\sqrt{N} \left[ \frac{1+\b\g}{1+\b}\int_0^t  f\left(\frac{1}{\sqrt{N}}Y_N\left(s\right) +  \frac{1}{\sqrt{N}}\frac{2\b}{1+\b}Z_N\left(s\right) \right) ds \right. \\  
\left.- \b \frac{1+\b\g}{1+\b} \int_0^t  f\left( \frac{1}{\sqrt{N}}Y_N(s) -  \frac{1}{\sqrt{N}} \frac{2}{1+\b}Z_N(s)  \right) ds \right].
\end{multline}
We can now rewrite \eqref{price} by using the Taylor expansion \eqref{taylor}, and we obtain
\be{priceexp}
\begin{split}
\Pi_N(t)  & = 2f'(0)\sqrt{N}\int_0^t Z_N(s) ds  \\ & ~~~ + \frac{1}{2} f''(0) \int_0^t \left[ \left( Y_N(s) + \frac{2\b}{1+\b} Z_N(s)\right)^2 - \left( Y_N(s) - \frac{2}{1+\b} Z_N(s)\right)^2 \right]ds
\\ & ~~~ + \frac{1}{\sqrt{N}} \int_0^t R\left( Y_N(s) + \frac{2\b}{1+\b} Z_N(s)\right)\left( Y_N(s) + \frac{2\b}{1+\b} Z_N(s)\right)^3 ds \\ &  ~~~ + \frac{1}{\sqrt{N}} \int_0^t R\left( Y_N(s) - \frac{2}{1+\b} Z_N(s)\right)\left( Y_N(s) - \frac{2}{1+\b} Z_N(s)\right)^3 ds
 + M_N^{\Pi}(t) \\ &  = 2f'(0)\sqrt{N}\int_0^t Z_N(s) ds  \\ & ~~~ +  f''(0) \int_0^t \left[  2Y_N(s) Z_N(s) + 2\frac{\b-1}{\b+1} Z_N^2(s) \right]ds
\\ & ~~~ + \frac{1}{\sqrt{N}} \int_0^t R\left( Y_N(s) + \frac{2\b}{1+\b} Z_N(s)\right)\left( Y_N(s) + \frac{2\b}{1+\b} Z_N(s)\right)^3 ds \\ &  ~~~ + \frac{1}{\sqrt{N}} \int_0^t R\left( Y_N(s) - \frac{2}{1+\b} Z_N(s)\right)\left( Y_N(s) - \frac{2}{1+\b} Z_N(s)\right)^3 ds
 + M_N^{\Pi}(t).
\end{split}
\ee
The problem here is in the term $2f'(0)\sqrt{N}\int_0^t Z_N(s) ds$. 
It follows from Lemma \ref{lemma:Z3} and the fact that $Z_N$ and $\tilde{Z}_N$ are equal up to time $\tau^N_h$, that
\[
\lim_{N \ra +\infty} \E \left[ \int_0^{T \wedge \tau^N_h} Z_N^2(s) ds \right] = 0,
\]
but this does not allow to control $\sqrt{N}\int_0^t Z_N(s) ds$.  To overcome this difficulty, we go back to \eqref{eqZ_N}; replacing once again $n_i^{\pm}(ds,du)$ by $\tilde{n}_i^{\pm}(ds,du) + ds \, du$ we obtain
\[
\begin{split}
Z_N(t) & = E_N + F_N t - \a \sqrt{N} \int_0^{t} Z_N(s) ds \\ & ~~~~~~+ \frac{1-\g}{2 }\sqrt{N} \int_0^{\sqrt{N}t}\left[ f\left(\frac{1}{\sqrt{N}}Y_N\left(s/\sqrt{N}\right) +  \frac{1}{\sqrt{N}}\frac{2\b}{1+\b}Z_N\left(s/\sqrt{N}\right) \right) \right. \\ & ~~~~~~~~~~~~~~~~~~~~~~ \left.- f\left( \frac{1}{\sqrt{N}}Y_N(s/\sqrt{N}) -  \frac{1}{\sqrt{N}} \frac{2}{1+\b}Z_N(s/\sqrt{N}) \right) \right]ds + M_N^Z(t) \\
& = E_N + F_N t - \a \sqrt{N} \int_0^{t} Z_N(s) ds \\ & ~~~~~~+ \frac{1-\g}{2 }N \int_0^{t}\left[ f\left(\frac{1}{\sqrt{N}}Y_N\left(s\right) +  \frac{1}{\sqrt{N}}\frac{2\b}{1+\b}Z_N\left(s\right) \right) \right. \\ & ~~~~~~~~~~~~~~~~~~~~~~ \left.- f\left( \frac{1}{\sqrt{N}}Y_N(s) -  \frac{1}{\sqrt{N}} \frac{2}{1+\b}Z_N(s) \right) \right]ds + M_N^Z(t)
\end{split}
\]
where $M_N^Z(t) = \frac{1-\g}{2} M_N^{\Pi}(t)$. Expanding $f$  as in \eqref{taylor} we obtain
\be{altZN}
\begin{split}
Z_N(t) & = E_N + F_N t - (\a- (1-\g) f'(0)) \sqrt{N} \int_0^{t} Z_N(s) ds \\ & ~~~~~~+ (1-\g) f''(0) \int_0^t \left[Y_N(s) Z_N(s) + \frac{\b-1}{1+\b} Z_N^2(s) \right] ds + Q_N(t) + \frac{1-\g}{2} M_N^{\Pi}(t),
\end{split}
\ee
where $Q_N(t)$ comes from the cubic part of the expansion;  by the same argument used in \eqref{estRem} it is shown that and $Q_N(t \wedge \tau_h^N)$ converges uniformly to zero as $N \ra +\infty$. We can now use \eqref{altZN} to eliminate the divergent term from \eqref{priceexp}.
\be{priceeq}
\begin{split}
\Pi_N(t) & = -\frac{2 f'(0)}{\a-(1-\g)f'(0)} Z_N(t) + \frac{2 f'(0)}{\a-(1-\g)f'(0)} E_N + \frac{2 f'(0)}{\a-(1-\g)f'(0)} F_N t \\ & + 2f''(0)\left( 2+ \frac{2 f'(0)(1-\g)}{\a-(1-\g)f'(0)} \right) \int_0^t \left[ Y_N(s)Z_N(s) + \frac{\b-1}{1+\b} Z_N^2(s)\right]ds \\ & + \left(1+\frac{f'(0)(1-\g)}{\a-(1-\g)f'(0)}\right) M_N^{\Pi}(t) + Q_N(t),
\end{split}
\ee
where $Q_N(t)$ is the same as the one in \eqref{altZN} up to a multiplicative constant.

We are now ready to use Theorem \ref{thm:diffusion:approx} for the two dimensional process $(\Pi_N, Y_N)$. The convergence of $Y_N$ has been proved in Proposition \ref{th:volatility}. The initial condition for the limit process $\pi$ follows from \eqref{priceeq} and Lemma \ref{lemma:Z3}, which guarantees that the term $-\frac{2 f'(0)}{\a-(1-\g)f'(0)} Z_N(t)$ is negligible for $t \in (0,T]$. 

So, we need to show the convergence of: the drift part of $\Pi_N$,  the  conditional quadratic variation of the martingale part of $\Pi_N$, and the quadratic covariation of the two martingale components. 

The drift part of $\Pi_N$ contains the term
\[
\frac{2 f'(0)}{\a-(1-\g)f'(0)} F_N t
\]
which asymptotically produces the constant drift 
\[
\frac{f'(0)(1-\g)}{\a-(1-\g)f'(0)}(b^+ - b^-).
\]
To complete the treatment of the drift we need to show that, for every $\e,\d>0$,
\[
\lim_{N \ra +\infty} \P \left(\sup_{\e \leq t \leq T \wedge \tau_h^N} \left|\int_0^t \left[ Y_N(s)Z_N(s) + \frac{\b-1}{1+\b} Z_N^2(s)\right]ds \right| \geq \delta \right) = 0,
\]
which readily follows from Lemma \ref{lemma:Z3} and the fact that $Y_N$ is bounded up to time $\tau_h^N$.

The martingale component of $\Pi_N$ is given by, using $\a = f'(0)(1+\b\g)$,
\[
\left(1+\frac{f'(0)(1-\g)}{\a-(1-\g)f'(0)}\right) M_N^{\Pi}(t) = \frac{1+\b\g}{\g(1+\b)}M_N^{\Pi}(t).
\]
Thus, using \eqref{pricequadvar}, we need to show that
\be{estmartprice}
\lim_{N \ra +\infty} \P \left(\sup_{\e \leq t \leq T \wedge \tau_h^N} \left|\int_0^t A_N(s) ds - \int_0^t 2 f'(0) Y_N(s) ds \right| \geq \delta \right) = 0,
\ee
where
\[
A_N(s) := \sqrt{N} \left[f\left( \frac{1}{\sqrt{N}}Y_N(s) -  \frac{1}{\sqrt{N}} \frac{2}{1+\b}Z_N(s)  \right) + f\left( \frac{1}{\sqrt{N}}Y_N(s) -  \frac{1}{\sqrt{N}} \frac{2}{1+\b}Z_N(s)\right)\right],
\]
which is established by a simple first order Taylor expansion of $f$.

Finally we need to verify the convergence of the covariation. Using \eqref{quadcovar}, it follows that the covariation $\langle  M^{\Pi}_N, M_{Y}^N \rangle(t)$ converges, in the sense of \eqref{eq:hyp4.7}, to
\[
f'(0) \frac{(1-\b)(1+\b \g)}{1+\b} \int_0^t y(s) ds.
\]
Using the specific values of $\sigma_{\pi}$ and $\sigma_{y}$ in \eqref{volmod}, we obtain
\[
\frac{\sqrt{f'(0) y(t) (1+\b^2)}(1+\b\g)}{1+\b} \frac{1+\b\g}{\g(1+\b)}\sqrt{2 f'(0) y(t)} d\langle B,W\rangle (t) = f'(0) \frac{(1-\b)(1+\b \g)}{1+\b} y(t) dt,
\]
which implies \eqref{intro:limcorr} and completes the proof. \\

We remark that this convergence of the price process $\Pi_N(t)$ for $t \in (0,T]$ does not extend to $t=0$; indeed the term $-\frac{2 f'(0)}{\a-(1-\g)f'(0)} Z_N(t)$ in \eqref{priceeq} does not go to zero unless $a^+ = a^-$.

\subsection{Proof of Theorem \ref{th:maininhom}}

The proof follows the same arguments as that of Theorem \ref{th:main}. Proceeding step by step, we illustrate the changes needed to cover the inhomogeneous case. For readability of the formulas, we omit the index $N$ in all empirical averages $\overline{f(\b,\g)}_N$; their formal identification to their limit values $\overline{f(\b,\g)}$ does not causes inconsistencies.

\bigskip

\noindent
{\em Step 1: representation of $m_N^{\pm}$ (Lemma \ref{lemma:mN})}

\smallskip
\noindent
The following modification of Lemma \ref{lemma:mN} clearly holds: 
\begin{equation} \label{macsdeinhom}
\begin{split}
m_N^{\pm}(t) &  = - \a \int_0^t m_N^{\pm}(s) ds  + \frac{1}{N} \sum_{i=1}^N \int_0^t \l_N^{i,\pm}(s)ds + a^{\pm}_N + b^{\pm}_N t+ M_N^{\pm}(t) ,
\end{split}
\end{equation}
where $\l_N^{i,\pm}$ are given in \eqref{lambdai} 
\[
M_N^{\pm}(t) := \int_{[0,t]\times [0,+\infty)} {\bf 1}_{[0,\l_N^{\pm}(s-))}(u) \frac{1}{N} \sum_{i=1}^N \tilde{n}_i^{\pm}(ds,du)
\]
are orthogonal martingales with conditional quadratic variations $\frac{1}{N^2}\sum_i \int_0^t \l_N^{i,\pm}(s)ds$. 

\bigskip
\noindent
{\em Step 2: definition of the volatility process and collapsing of $Z_N$ (Lemmas \ref{lemma:Z1}, \ref{lemma:Z2} and \ref{lemma:Z3})}

\smallskip
\noindent

We now define
\[
Y_N(t) 
 = \sqrt{N} \left[\frac{\overline{\g}(1+\overline{\b\g})}{\overline{\g}+ \overline{\b\g}} m_N^+(\sqrt{N}t) + \frac{\overline{\b\g}(1+\overline{\b\g})}{\overline{\g}+ \overline{\b\g}} m_N^-(\sqrt{N}t)\right]
\]
and
\[
Z_N(t) 
 = \frac{(1-\overline{\g})\sqrt{N}}{2} \left[  m_N^+(\sqrt{N}t) -  m_N^-(\sqrt{N}t)\right].
\]
Note that
\[
\l_N^{+}(t) = f \left( \frac{1}{\sqrt{N}}\frac{1+\b^N_i\g^N_i}{1+\overline{\b\g}}Y_N(t/\sqrt{N}) +  \frac{2}{\sqrt{N}} \frac{\overline{\b\g} - \b^N_i\g^N_i \overline{\g}}{(1-\overline{\g})(\overline{\b\g}+ \overline{\g})}Z_N(t/\sqrt{N}) \right)
\]
and
\[
\l_N^{-}(t) = f \left( \frac{1}{\sqrt{N}}\frac{1+\b^N_i\g^N_i}{1+\overline{\b\g}}Y_N(t/\sqrt{N}) +  \frac{2}{\sqrt{N}} \frac{\g^N_i \overline{\b\g} - \overline{\g} -\b^N_i\g^N_i \overline{\g} + \g^N_i \overline{\g} }{(1-\overline{\g})(\overline{\b\g}+ \overline{\g})}Z_N(t/\sqrt{N}) \right)
\]
We obtain the generalization of \eqref{eqZ_N}:
\be{eqtZ_Ninhom}
\begin{split}
Z_N(t) & = E_N + F_N t  - \a \sqrt{N} \int_0^{t} Z_N(s) ds \\ & ~~~~~~+ \frac{1-\overline{\g}}{2 \sqrt{N}} \sum_{i=1}^N \int_{[0,\sqrt{N} t]} {\bf 1}_{[0,\l_N^{i,+}(s))}(u)  n_i^+(ds,du) - \frac{1-\overline{\g}}{2 \sqrt{N}} \sum_{i=1}^N\int_{[0,\sqrt{N} t]} {\bf 1}_{[0,\l_N^{i,-}(s))}(u) \sum_{i=1}^N n_i^-(ds,du),
\end{split}
\ee
where 
\[
E_N = \frac{1-\overline{\g}}{2} \sqrt{N} [a_N^+ - a_N^-] \ \ \ F_N = \frac{1-\overline{\g}}{2} \sqrt{N} [b_N^+ - b_N^-].
\]
After having defined the truncated processes $\tY_N$ and $\tZ_N$, the proof of the result in Lemma \ref{lemma:Z2} comes again from the semimartingale representation of $\tZ_N^4$, that now reads
\be{tZ2martinhom}
\begin{split}
\tZ_N^4(t)  & = E_N^4 + 4 F_N \int_0^t \tZ^2_N(s)ds - 4 \a \sqrt{N} \int_0^t \tZ_N^4(s)ds \\ & + \sqrt{N} \int_0^{t}\left[\frac{(1-\overline{\g})^4}{16N} + \frac{(1-\overline{\g})^3}{ 2 N^{\frac32}}\tZ_N\left(s\right) + 3\frac{(1-\overline{\g})^2}{ 2 N}\tZ^2_N\left(s\right) + 2\frac{(1-\overline{\g})}{ \sqrt{N}}\tZ^3_N\left(s\right)\right] \\ & ~~~~~~~~~~~~~ \sum_{i=1}^N f \left( \frac{1}{\sqrt{N}}\frac{1+\b^N_i\g^N_i}{1+\overline{\b\g}}\tY_N(s) +  \frac{2}{\sqrt{N}} \frac{\overline{\b\g} - \b^N_i\g^N_i \overline{\g}}{(1-\overline{\g})(\overline{\b\g}+ \overline{\g})}\tZ_N(s) \right) ds \\ & + \sqrt{N} \int_0^{t}\left[\frac{(1-\overline{\g})^4}{16N} - \frac{(1-\overline{\g})^3}{ 2 N^{\frac32}}\tZ_N\left(s\right) + 3\frac{(1-\overline{\g})^2}{ 2 N}\tZ^2_N\left(s\right) - 2\frac{(1-\overline{\g})}{ \sqrt{N}}\tZ^3_N\left(s\right)\right] \\ & ~~~~~~~~~~~~~ \sum_{i=1}^N f \left( \frac{1}{\sqrt{N}}\frac{1+\b^N_i\g^N_i}{1+\overline{\b\g}}\tY_N(s) +  \frac{2}{\sqrt{N}} \frac{\g^N_i \overline{\b\g} - \overline{\g} -\b^N_i\g^N_i \overline{\g} + \g^N_i \overline{\g} }{(1-\overline{\g})(\overline{\b\g}+ \overline{\g})}\tZ_N(s) \right) ds \\ & + M_N^{\tZ^4}(t).
\end{split}
\ee
By using the fact that $f$ is concave and Lipschitz, as in \eqref{conc} and \eqref{Lip}, we have
\begin{multline*}
\sum_{i=1}^N \left[f \left( \frac{1}{\sqrt{N}}\frac{1+\b^N_i\g^N_i}{1+\overline{\b\g}}\tY_N(s) +  \frac{2}{\sqrt{N}} \frac{\overline{\b\g} - \b^N_i\g^N_i \overline{\g}}{(1-\overline{\g})(\overline{\b\g}+ \overline{\g})}\tZ_N(s) \right)  \right. \\ + \left.f \left( \frac{1}{\sqrt{N}}\frac{1+\b^N_i\g^N_i}{1+\overline{\b\g}}\tY_N(s) +  \frac{2}{\sqrt{N}} \frac{\g_i \overline{\b\g} - \overline{\g} -\b^N_i\g^N_i \overline{\g} + \g^N_i \overline{\g} }{(1-\overline{\g})(\overline{\b\g}+ \overline{\g})}\tZ_N(s) \right)\right]\\
\leq f'(0) \sum_{i=1}^N\left[ \left( \frac{1}{\sqrt{N}}\frac{1+\b^N_i\g^N_i}{1+\overline{\b\g}}\tY_N(s) +  \frac{2}{\sqrt{N}} \frac{\overline{\b\g} - \b^N_i\g^N_i \overline{\g}}{(1-\overline{\g})(\overline{\b\g}+ \overline{\g})}\tZ_N(s) \right) \right. \\+ \left.\left( \frac{1}{\sqrt{N}}\frac{1+\b^N_i\g^N_i}{1+\overline{\b\g}}\tY_N(s) +  \frac{2}{\sqrt{N}} \frac{\g_i \overline{\b\g} - \overline{\g} -\b^N_i\g^N_i \overline{\g} + \g^N_i \overline{\g} }{(1-\overline{\g})(\overline{\b\g}+ \overline{\g})}\tZ_N(s) \right) \right] \\ \leq \mbox{constant} \cdot\sqrt{N} (\tY_N(s) + |\tZ_N(s)|)
\end{multline*}
and
\begin{multline*}
\sum_{i=1}^N \left|f \left( \frac{1}{\sqrt{N}}\frac{1+\b^N_i\g^N_i}{1+\overline{\b\g}}\tY_N(s) +  \frac{2}{\sqrt{N}} \frac{\overline{\b\g} - \b^N_i\g^N_i \overline{\g}}{(1-\overline{\g})(\overline{\b\g}+ \overline{\g})}\tZ_N(s) \right)  \right. \\ - \left.f \left( \frac{1}{\sqrt{N}}\frac{1+\b^N_i\g^N_i}{1+\overline{\b\g}}\tY_N(s) +  \frac{2}{\sqrt{N}} \frac{\g^N_i \overline{\b\g} - \overline{\g} -\b^N_i\g^N_i \overline{\g} + \g^N_i \overline{\g} }{(1-\overline{\g})(\overline{\b\g}+ \overline{\g})}\tZ_N(s) \right)\right|\\ \leq f'(0) \sum_{i=1}^N \left| \left( \frac{1}{\sqrt{N}}\frac{1+\b^N_i\g^N_i}{1+\overline{\b\g}}\tY_N(s) +  \frac{2}{\sqrt{N}} \frac{\overline{\b\g} - \b^N_i\g^N_i \overline{\g}}{(1-\overline{\g})(\overline{\b\g}+ \overline{\g})}\tZ_N(s) \right) \right. \\ \left.-  \left( \frac{1}{\sqrt{N}}\frac{1+\b^N_i\g^N_i}{1+\overline{\b\g}}\tY_N(s) +  \frac{2}{\sqrt{N}} \frac{\g^N_i \overline{\b\g} - \overline{\g} -\b^N_i\g^N_i \overline{\g} + \g^N_i \overline{\g} }{(1-\overline{\g})(\overline{\b\g}+ \overline{\g})}\tZ_N(s) \right) \right| \\ \leq  2f'(0)\cdot\sqrt{N} |\tZ_N(s)|,
\end{multline*}
In the first estimate the generic constant in the last term may depend on the values of all parameters; in the second estimate there are several crucial cancellations to obtain the specific upper bound.

Using these estimates, inequality \eqref{tZ2derest2} can be derived  and Lemma \ref{lemma:Z2} follows for this inhomogeneous model. The proof of Lemma \ref{lemma:Z3}, which uses a semimartingale representation for $\tZ_N^2$, follows along the same lines, and we omit the details.

\bigskip
\noindent
{\em Step 3: dynamics of the volatility (Proposition \ref{th:volatility})}

\smallskip
\noindent
We begin by giving, as in \eqref{Y}, the semimartingale representation of $Y_N$:
\be{Yinhom}
\begin{split}
Y_N(t) & = C_N + D_N t  -\a \sqrt{N} \int_0^t Y_N(s)  ds  \\ & ~~~+\frac{\overline{\g}(1+\overline{\b\g})}{\overline{\g}+ \overline{\b\g}} \sum_{i=1}^N \int_0^t f \left( \frac{1}{\sqrt{N}}\frac{1+\b^N_i\g^N_i}{1+\overline{\b\g}}\tY_N(s) +  \frac{2}{\sqrt{N}} \frac{\overline{\b\g} - \b^N_i\g^N_i \overline{\g}}{(1-\overline{\g})(\overline{\b\g}+ \overline{\g})}\tZ_N(s) \right) ds \\ & ~~~+\frac{\overline{\b\g}(1+\overline{\b\g})}{\overline{\g}+ \overline{\b\g}} \sum_{i=1}^N \int_0^t f \left( \frac{1}{\sqrt{N}}\frac{1+\b^N_i\g^N_i}{1+\overline{\b\g}}\tY_N(s) -  \frac{2}{\sqrt{N}}\frac{\g^N_i \overline{\b\g} - \overline{\g} -\b^N_i\g^N_i \overline{\g} + \g^N_i \overline{\g} }{(1-\overline{\g})(\overline{\b\g}+ \overline{\g})}\tZ_N(s) \right) ds \\ & ~~~ + M_N^Y(t),
\end{split}
\ee
where 
\[
\begin{split}
C_N & = \sqrt{N} \left[  \frac{\overline{\g}(1+\overline{\b\g})}{\overline{\g}+ \overline{\b\g}} a_N^+ +  \frac{\overline{\b\g}(1+\overline{\b\g})}{\overline{\g}+ \overline{\b\g}} a_N^- \right] \\
D_N & = \sqrt{N} \left[  \frac{\overline{\g}(1+\overline{\b\g})}{\overline{\g}+ \overline{\b\g}} b_N^+ +  \frac{\overline{\b\g}(1+\overline{\b\g})}{\overline{\g}+ \overline{\b\g}} b_N^- \right]
\end{split}
\]
and
$M_N^Y(t)$ is a martingale with conditional quadratic variation
\[
\begin{split}
\left(\frac{\overline{\g}(1+\overline{\b\g})}{\overline{\g}+ \overline{\b\g}}\right)^2 \frac{1}{\sqrt{N}} \sum_{i=1}^N \int_0^t f \left( \frac{1}{\sqrt{N}}\frac{1+\b^N_i\g^N_i}{1+\overline{\b\g}}Y_N(s) +  \frac{2}{\sqrt{N}} \frac{\overline{\b\g} - \b^N_i\g^N_i \overline{\g}}{(1-\overline{\g})(\overline{\b\g}+ \overline{\g})}Z_N(s) \right) ds \\ + \left( \frac{\overline{\b\g}(1+\overline{\b\g})}{\overline{\g}+ \overline{\b\g}}\right)^2 \frac{1}{\sqrt{N}} \sum_{i=1}^N \int_0^t f \left( \frac{1}{\sqrt{N}}\frac{1+\b^N_i\g^N_i}{1+\overline{\b\g}}Y_N(s) -  \frac{2}{\sqrt{N}}\frac{\g^N_i \overline{\b\g} - \overline{\g} -\b^N_i\g^N_i \overline{\g} + \g^N_i \overline{\g} }{(1-\overline{\g})(\overline{\b\g}+ \overline{\g})}Z_N(s) \right) ds.
\end{split}
\]
From this point on one proceeds with the second order expansion of $f$ as in the proof of Proposition \ref{th:volatility}; no appreciable difference in the rest of the proof emerges.

\bigskip
\noindent
{\em Step 4: dynamics of the price}

\medskip
\noindent

The price representation \eqref{price} now becomes 
\be{priceinhom}
\begin{split}
\Pi_N(t)  & = \sum_{i=1}^N  \left[\int_0^tf \left( \frac{1}{\sqrt{N}}\frac{1+\b^N_i\g^N_i}{1+\overline{\b\g}}\tY_N(s) +  \frac{2}{\sqrt{N}} \frac{\overline{\b\g} - \b^N_i\g^N_i \overline{\g}}{(1-\overline{\g})(\overline{\b\g}+ \overline{\g})}\tZ_N(s) \right)ds  \right. \\ & - \left. \int_0^tf \left( \frac{1}{\sqrt{N}}\frac{1+\b^N_i\g^N_i}{1+\overline{\b\g}}\tY_N(s) -  \frac{2}{\sqrt{N}} \frac{\g^N_i \overline{\b\g} - \overline{\g} -\b^N_i\g^N_i \overline{\g} + \g^N_i \overline{\g} }{(1-\overline{\g})(\overline{\b\g}+ \overline{\g})}\tZ_N(s) \right) ds \right] \\ & + M_N^{\Pi}(t)
\end{split}
\ee
where $M_N^{\Pi}$ is a martingale with zero mean and conditional quadratic variation 
\be{pricequadvarinhom}
\begin{split}
\sum_{i=1}^N & \left[\int_0^tf \left( \frac{1}{\sqrt{N}}\frac{1+\b^N_i\g^N_i}{1+\overline{\b\g}}\tY_N(s) +  \frac{2}{\sqrt{N}} \frac{\overline{\b\g} - \b^N_i\g^N_i \overline{\g}}{(1-\overline{\g})(\overline{\b\g}+ \overline{\g})}\tZ_N(s) \right)ds  \right. \\ & + \left. \int_0^tf \left( \frac{1}{\sqrt{N}}\frac{1+\b^N_i\g^N_i}{1+\overline{\b\g}}\tY_N(s) -  \frac{2}{\sqrt{N}} \frac{\g^N_i \overline{\b\g} - \overline{\g} -\b^N_i\g^N_i \overline{\g} + \g^N_i \overline{\g} }{(1-\overline{\g})(\overline{\b\g}+ \overline{\g})}\tZ_N(s) \right) ds \right] \end{split}
\ee
The quadratic covariation between $M_N^{\Pi}$ and $M_N^{Y}$ is now given by 
\be{quadcovarinhom}
\begin{split}
\frac{1}{\sqrt{N}} \sum_{i=1}^N \left[ \right. & \left. \frac{\overline{\g}(1+\overline{\b\g})}{\overline{\g}+\overline{\b\g}} \int_0^t f \left( \frac{1}{\sqrt{N}}\frac{1+\b_i\g_i}{1+\overline{\b\g}}\tY_N(s) +  \frac{2}{\sqrt{N}} \frac{\overline{\b\g} - \b_i\g_i \overline{\g}}{(1-\overline{\g})(\overline{\b\g}+ \overline{\g})}\tZ_N(s) \right)ds  \right. 
\\ & + \left.  \frac{\overline{\b\g}(1+\overline{\b\g})}{\overline{\g}+\overline{\b\g}}\int_0^tf \left( \frac{1}{\sqrt{N}}\frac{1+\b_i\g_i}{1+\overline{\b\g}}\tY_N(s) -  \frac{2}{\sqrt{N}} \frac{\g_i \overline{\b\g} - \overline{\g} -\b_i\g_i \overline{\g} + \g_i \overline{\g} }{(1-\overline{\g})(\overline{\b\g}+ \overline{\g})}\tZ_N(s) \right) ds \right]
\end{split}
\ee
Taylor expansion of $f$ in \eqref{priceinhom} leads, as in the proof in Section \ref{sec:price}, to a representation of the form
\be{priceexpinhom}
\Pi_N(t) = 2 f'(0) \sqrt{N} \int_0^t Z_N(s) ds + \mbox{ vanishing terms } + M_N^{\Pi}(t).
\ee
To deal with the divergent term, as in Section \ref{sec:price}, we derive from \eqref{eqtZ_Ninhom}
\[
Z_N(t) = E_N + F_N t - (\a-(1-\overline{\g})f'(0)) \sqrt{N} \int_0^t Z_N(s) ds +  \mbox{ vanishing terms } + \frac{1-\overline{\g}}{2}M_N^{\Pi}(t),
\]
that we solve for $\sqrt{N} \int_0^t Z_N(s) ds$ and replace into \eqref{priceexpinhom}. No other relevant difference emerges for the proof in Section \ref{sec:price}.

\subsection{Proof of Theorem \ref{th:mainself}}

The extension of the proof to this case is rather straightforward, so we only discuss the key points. To begin with it is convenient to define
\[
Y^i_N(t) 
 = \sqrt{N} \left[\frac{1+\b\g}{1+\b} X_i^+(\sqrt{N}t) + \frac{\b(1+\b\g)}{1+\b} X_i^-(\sqrt{N}t)\right]
\]
and
\[
Z^i_N(t) 
 = \frac{(1-\g)\sqrt{N}}{2} \left[  X_i^+(\sqrt{N}t) -  X_i^-(\sqrt{N}t)\right],
\]
so that
\[
Y_N(t) = \frac{1}{N} \sum_{i=1}^N Y^i_N(t) \hspace{1cm} Z_N(t) = \frac{1}{N} \sum_{i=1}^N Z^i_N(t).
\]
The representation \eqref{eqZ_N} still holds, but now in the form
\be{selfnew1}
\begin{split}
Z_N(t) & = E_N + F_N t  - \a \sqrt{N} \int_0^{t} Z_N(s) ds \\ & ~~~~~~+ \frac{1-\g}{2 \sqrt{N}} \sum_{i=1}^N \int_{[0,\sqrt{N} t]} {\bf 1}_{[0,\l_N^{i,+}(s))}(u)  n_i^+(ds,du) - \frac{1-\g}{2 \sqrt{N}} \sum_{i=1}^N\int_{[0,\sqrt{N} t]} {\bf 1}_{[0,\l_N^{i,-}(s))}(u) \sum_{i=1}^N n_i^-(ds,du),
\end{split}
\ee
where
\be{selfnew2}
\l_N^{i,+}(t) = f \left( \frac{1}{\sqrt{N}}\left[Y_N(t/\sqrt{N}) + \frac{\kappa}{\sqrt{N}} Y^i_N(t/\sqrt{N}) \right] +  \frac{1}{\sqrt{N}} \frac{2\b}{1+\b}\left[Z_N(t/\sqrt{N}) + \frac{\kappa}{\sqrt{N}} Z^i_N(t/\sqrt{N}) \right] \right)
\ee
and
\be{selfnew3}
\l_N^{i,-}(t) = f \left( \frac{1}{\sqrt{N}}\left[Y_N(t/\sqrt{N}) + \frac{\kappa}{\sqrt{N}} Y^i_N(t/\sqrt{N}) \right]  -  \frac{1}{\sqrt{N}} \frac{2}{1+\b}\left[Z_N(t/\sqrt{N}) + \frac{\kappa}{\sqrt{N}} Z^i_N(t/\sqrt{N}) \right] \right).
\ee
The proof of collapsing of $Z_N$ requires no changes: it only uses the first order properties of $f$, so the extra terms $\frac{\kappa}{\sqrt{N}} Y^i_N(t/\sqrt{N}) $ and $\frac{\kappa}{\sqrt{N}} Z^i_N(t/\sqrt{N}) $ give a negligible contribution as $N \rightarrow +\infty$. The only difference emerges in the analysis of the volatility, which starts from
\be{Yself}
\begin{split}
Y_N(t) & = C_N + D_N t  -\a \sqrt{N} \int_0^t Y_N(s)  ds  \\ & ~~~+ \sum_{i=1}^N \left[ \frac{1+\b\g}{1+\b} \int_0^{ t} f \left( \frac{1}{\sqrt{N}}\left[Y_N(s) + \frac{\kappa}{\sqrt{N}} Y^i_N(s) \right]  \right. \right. \\ & ~~~~~~~~~~~~~~~~~~~~~~~~~~~~~~ +  \left. \left. \frac{1}{\sqrt{N}} \frac{2\b}{1+\b}\left[Z_N(s) + \frac{\kappa}{\sqrt{N}} Z^i_N(s) \right] \right) ds \right. \\
& ~~~ + \left. \b\frac{1+\b\g}{1+\b}  \int_0^{ t} f \left( \frac{1}{\sqrt{N}}\left[Y_N(s) + \frac{\kappa}{\sqrt{N}} Y^i_N(s) \right]  \right. \right. \\ & ~~~~~~~~~~~~~~~~~~~~~~~~~~~~~~ - \left. \left.  \frac{1}{\sqrt{N}} \frac{2}{1+\b}\left[Z_N(s) + \frac{\kappa}{\sqrt{N}} Z^i_N(s) \right] \right) ds \right]  \\ & ~~~ + M_N^Y(t)
\end{split}
\ee
As before, criticality condition \eqref{intro:criticalself} only depends on dominant terms in the first order expansion of $f$, to which the new extra terms do not contribute. However, the term of the first order expansion $\frac{\kappa}{N} f'(0) \sum_{i=1}^N Y^i_N(s) = \kappa f'(0) Y_N(s)$ has order one, the same as the quadratic term containing $Y^2_N(s)$. This explains the extra linear term in the drift of the volatility in \eqref{volmodself}. The rest of the proof requires no change as, again, the derivation of the price dynamics only depends on the linear terms in the expansion of $f$.

\begin{remark} \label{rem:nonmar}
The representations \eqref{selfnew1}, \eqref{selfnew2} and \eqref{selfnew3} imply that the pair $(Z_N,Y_N)$ is not a Markov process. To see this we observe that \eqref{micsde}, with $\l^{i, \pm}$ in place of $\l^{\pm}$ implies that the $[0,+\infty)^{N}$-valued process $\mathbf{X}^{\pm}(t)  = (X_i^{\pm}(t))_{i=1}^N$ is a Markov process whose infinitesimal generator $\mathcal{L}_N$ acts on a smooth $F:[0,+\infty)^{N} \ra \R$ as follows: for $\mathbf{x} = (x_1^{\pm}, \ldots , x_N^{\pm})$
\[
\mathcal{L}_N F(\mathbf{x})  = \sum_{s \in \{\pm\}} \sum_{i=1}^N \left(-\a x_i^s + b_N^s]\right) \frac{\partial F}{\partial x_i^s} (\mathbf{x})  +  \sum_{s \in \{\pm\}} \sum_{i=1}^N \l_N^{s,i}(\mathbf{x}) \left[ T_i^s F(\mathbf{x}) - F(\mathbf{x}) \right],
\]
where $T_i^s F(\mathbf{x})$ is obtained from $F(\mathbf{x})$ by replacing $x_i^s$ with $x_i^s +1$ and, by slight abuse of notation,
\[
\begin{split}
\l_N^{i,+}(\mathbf{x}) &  := f(m_N^{i,+}(\mathbf{x}) + \b \g m_N^{i,-}(\mathbf{x})) \\
\l_N^{i,-}(\mathbf{x}) & := f( \g m_N^{i,+}(\mathbf{x}) + (1+(\b-1) \g) m_N^{i,-}(\mathbf{x})),
\end{split}
\]
with 
\[
m_N^{i,\pm}(\mathbf{x}) = \frac{1}{N} \sum_{j=1}^N x_j^{\pm} + \frac{\kappa}{\sqrt{N}} x_i^{\pm}.
\]
Note that if $(Z_N(t),Y_N(t))$ were a Markov process, then $m_N^{i,\pm}(\mathbf{X(t)})$ would also be a Markov process, since they are linked by an invertible linear transformation (except in the special case $\b = \g = 1$; but in this case $Z_N = 0$ and the argument can be adapted by replacing the pair $m_N^{i,\pm}(\mathbf{X(t)})$ with the sum of the two components). This would imply that, for a smooth $\Psi:[0,+\infty)^2 \ra \R$, the generator
$
\mathcal{L}_N \Psi(m_N^{\pm}(\mathbf{x}))
$
would be a function of $m_N^{\pm}(\mathbf{x})$. This is not the case. Indeed observe that, by permutation invariance, $T_i^s \Psi - \Psi$ does not depend on $i$, and therefore in $\mathcal{L}_N \Psi(m_N^{\pm}(\mathbf{x}))$ we obtain the term
$
\sum_{i=1}^N \l_N^{i,s}(\mathbf{x})
$
which is not a function of $m_N^{\pm}(\mathbf{x})$ unless $f$ is linear.

\end{remark}

\appendix

\section{Useful results}

These theorems correspond to Theorem VII, 4.1 \cite{EtKu2005} and the Proposition in Appendix A in \cite{CE88}.

\begin{theorem}[\textbf{Diffusion approximation }]\label{thm:diffusion:approx}
Let $\left(X_N\right)_{N\in \mathbb{N}}$ and  $\left(B_N\right)_{N\in \mathbb{N}}$ be $\R^d$-valued processes with càdlàg sample paths and let $A_N = \left(A_N^{i j}\right)$ be a symmetric $d\times d$ matrix-valued process such that $A_N^{i j}$ has càdlàg sample paths in $\R$ and $A_N(t)-A_N(s)$ is non-negative definite for all $t> s \geq 0$. 
Let $\mathcal{F}^N_t \coloneqq \sigma\left(X_N(s), B_N(s), A_N(s)\, : \, s\leq t \right)$. 
Let $\tau^N_h \coloneqq \inf\left\{t > 0\, : \, \vert X_N(t)\vert\geq h \, \text{or}\, \vert X_N(t^-)\vert \geq h \right\}$, where $X_N(t^-) \coloneqq \lim_{s\to t^-} X_N(s)$ denotes the left-hand limit. 
Assume that
\begin{itemize}
    \item $M_N \coloneqq X_N - B_N$ and $M_N^i M_N^j - A_N^{i j }$ $i,j=1,\dots,d$ are $\left(\mathcal{F}^N_t\right)$-local martingales
    \item for each $T>0$, $h>0$
    \begin{equation}\label{eq:hyp4.4}
        \lim_{N \to +\infty}\E\left[ \sup_{t \leq T \wedge \tau^N_h} \vert B_N(t) - B_N(t^-)\vert^2\right] = 0
    \end{equation}
    \item for each $T>0$, $h>0$, $i,j=1,\dots,d$
    \begin{equation}\label{eq:hyp4.5}
        \lim_{N \to +\infty}\E\left[ \sup_{t \leq T \wedge \tau^N_h} \vert A_N^{i j}(t) - A_N^{i j}(t^-)\vert\right] = 0 
    \end{equation}
    \item for each $T>0$, $h>0$ 
    \begin{equation}\label{eq:hyp4.3}
        \lim_{N \to +\infty}\E\left[ \sup_{t \leq T \wedge \tau^N_h} \vert X_N(t) - X_N(t^-)\vert^2\right] = 0 
    \end{equation}
    \item there exist a continuous, symmetric, non-negative definite $d\times d$ matrix-valued function on $\R^d$, $a=\left(a_{i j}\right)$, and a continuous function $b: \R^d \to \R^d$ such that, for each $h>0$, $T>0$ and $i,j=1,\dots,d$, and for all $\epsilon>0$, 
    \begin{equation}\label{eq:hyp4.7}
        \lim_{N\to +\infty}\mathbb{P}\left(\sup_{t\leq T\wedge \tau^N_h} \Bigg\vert A_N^{ i j}(t) - \int_0^t a_{i j }(X_N(s))ds \Bigg\vert > \epsilon \right) = 0 
    \end{equation}
    and 
    \begin{equation}\label{eq:hyp4.6}
        \lim_{N\to +\infty}\mathbb{P}\left(\sup_{t\leq T \wedge \tau^N_h} \Bigg\vert B_N^i(t) - \int_0^t b_i(X_N(s))ds \Bigg\vert > \epsilon \right) = 0 
    \end{equation}
    \item the $\mathcal{C}_{\R^d}\left([0,+\infty)\right)$ martingale problem for
	\begin{equation}\label{A:EtKu}
	\tilde{A} \coloneqq \left\{\left(f,Gf\coloneqq \frac{1}{2}\sum_{i,j} a_{i j}\partial_i \partial_j f + \sum_i b_i \partial_i f\right)\, : \, f \in \mathcal{C}^{\infty}_c\left(\R^d\right) \right\} 
	\end{equation}
is well-posed. 

    \item the sequence of the initial laws of the $X_N$'s converges in distribution to some probability distribution on $\R^d$, $\nu$. 
\end{itemize}
Then $\left(X_N\right)_N$ converges in distribution to the solution of the martingale problem for $(\tilde{A},\nu)$. 
That is, the laws of the processes $\left(X_N\right)_N$ converge weakly to the law of a process $X$ which is a weak solution of the SDE 
\begin{equation}
    \begin{split}
        dX(t) = b(X(t))dt +\Sigma(X(t)) dW(t)
    \end{split}
\end{equation}
where $b = (b_i)_i$ and $\left(\Sigma \Sigma^T\right)_{i j }= a_{i j} $ are the drift vector and the diffusion coefficient in \eqref{A:EtKu}.
\end{theorem}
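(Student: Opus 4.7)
The plan is to follow the classical three-step strategy for diffusion approximation via the martingale problem: (i) prove tightness of $(X^N)_N$ in the Skorohod space $D_{\R^d}([0,+\infty))$; (ii) identify every subsequential weak limit as a solution of the martingale problem for $(\tilde{A},\nu)$; (iii) invoke the assumed well-posedness of that martingale problem to conclude convergence of the full sequence, and appeal to the standard equivalence with weak SDE solutions.

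For step (i), I would first establish compact containment, namely $\lim_{h\to+\infty}\sup_N \P(\tau_h^N\leq T)=0$. Using the decomposition $X^N=B^N+M^N$, hypotheses \eqref{eq:hyp4.6} and \eqref{eq:hyp4.7} show that, up to $\tau_h^N$, the drift $B^N$ is close to $\int_0^\cdot b(X^N(s))\,ds$ with $b$ bounded on the ball of radius $h$, while the bracket of $M^N$ is close to $\int_0^\cdot a(X^N(s))\,ds$, also bounded on the same ball. Combined with the convergence of initial laws to $\nu$ and Doob's inequality, this gives compact containment. Tightness then follows from the Aldous--Rebolledo criterion: the small-increment control for the drift and bracket parts comes from \eqref{eq:hyp4.6}--\eqref{eq:hyp4.7}, while the jump hypotheses \eqref{eq:hyp4.3}--\eqref{eq:hyp4.5} force the maximum jump size to vanish, so every subsequential weak limit has continuous paths.

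For step (ii), let $X$ be a subsequential weak limit, along a subsequence still denoted by $N$. By Skorohod representation I may assume the convergence is almost sure in $D_{\R^d}([0,+\infty))$, hence locally uniform since $X$ is continuous. Fix $f\in \mathcal{C}_c^\infty(\R^d)$ and apply It\^o's formula to $f(X^N)$; using $X^N=M^N+B^N$ with $\langle M_i^N,M_j^N\rangle=A_{ij}^N$ (up to terms that vanish by \eqref{eq:hyp4.3}--\eqref{eq:hyp4.5}), stopping at $\tau_h^N$ yields that
\[
f(X^N(t\wedge\tau_h^N))-f(X^N(0))-\int_0^{t\wedge\tau_h^N}\Bigl(\sum_i \partial_i f\,dB_i^N+\tfrac{1}{2}\sum_{i,j}\partial_i\partial_j f\,dA_{ij}^N\Bigr)
\]
is a local martingale. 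Using \eqref{eq:hyp4.6}, \eqref{eq:hyp4.7}, the continuity of $a,b$, and the locally uniform convergence $X^N\to X$, the drift term converges in probability to $\int_0^{t\wedge\tau_h}Gf(X(s))\,ds$, where $\tau_h=\inf\{t:|X(t)|\geq h\}$. The martingale property passes to the limit by uniform integrability, which holds because $f$ has compact support and $Gf$ is bounded. Thus $X$ solves the stopped martingale problem on $[0,\tau_h]$; compact containment from step (i) gives $\tau_h\to+\infty$ a.s.\ as $h\to+\infty$, so $X$ solves the full martingale problem for $\tilde A$ with initial law $\nu$.

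Finally, for step (iii), well-posedness of the $\mathcal{C}_{\R^d}([0,+\infty))$ martingale problem for $(\tilde{A},\nu)$ forces all subsequential limits to coincide in law, hence $X^N\Rightarrow X$ along the full sequence. The equivalence between solutions of this martingale problem (with continuous coefficients $a,b$) and weak solutions of the SDE $dX=b(X)dt+\Sigma(X)dW$ with $\Sigma\Sigma^T=a$ is classical and completes the proof. The main technical obstacle is the interplay between $\tau_h^N$ and the limit: the hypotheses \eqref{eq:hyp4.6}--\eqref{eq:hyp4.7} only control $B^N,A^N$ up to $\tau_h^N$, so removing this localization in step (ii) essentially requires non-explosion of the limit, which is supplied by the well-posedness assumption.
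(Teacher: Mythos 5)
This statement is not proved in the paper at all: it is quoted verbatim as a known result, namely Theorem 4.1 of Chapter 7 in Ethier--Kurtz \cite{EtKu2005}, and the appendix explicitly says so. Your outline is essentially the architecture of the textbook proof (localization via $\tau_h^N$, relative compactness, identification of limits of the stopped processes as solutions of a stopped martingale problem, and uniqueness from well-posedness), so as a reconstruction it is on the right track rather than a genuinely different route.

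One point in your write-up needs care, and you half-notice it yourself at the end. In step (i) you claim compact containment, $\lim_{h\to\infty}\sup_N\P(\tau_h^N\leq T)=0$, directly from \eqref{eq:hyp4.6}--\eqref{eq:hyp4.7} plus Doob's inequality. But those hypotheses only control $B^N$ and $A^N$ \emph{up to} $\tau_h^N$, and boundedness of $a,b$ on the ball of radius $h$ gives no control on how fast $X^N$ reaches that ball's boundary as $h$ grows; without a global growth condition on $a,b$ this argument does not close. The correct order (as in Ethier--Kurtz) is: first prove relative compactness and limit identification for the \emph{stopped} processes $X^N(\cdot\wedge\tau_h^N)$ for each fixed $h$, concluding that they converge to solutions of the stopped martingale problem; then use the well-posedness (hence non-explosion) of the full limit martingale problem to deduce that $\limsup_N\P(\tau_h^N\leq T)\to 0$ as $h\to\infty$, which removes the localization. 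So compact containment is an output of the identification step combined with well-posedness, not an input to tightness. With that reordering your sketch matches the standard proof.
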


\begin{theorem} \label{collapsing}
Let $\{X_n(t)\}_{n \geq 1}$ be a sequence of positive semimartingales on a probability space $(\Omega, \mathcal{A}, \mathcal{P})$, with
\[dX_n(t) = S_n(t)dt + \int_{[0,t]\times \mathcal{Y}} f_n(s-, y) [\Lambda_n(ds,dy) - A_n(s,dy)ds]\]
Here, $\Lambda_n$ is a Point Process of intensity $A_n(t,dy)dt$ on $\mathbb{R}^+\times \mathcal{Y}$, where $\mathcal{Y}$ is a measurable space, and $S_n(t)$ and $f_n(t)$ are $\mathcal{A}_t$-adapted processes, if we consider $(\mathcal{A}_t)_{t\geq 0}$ a filtration on $(\Omega, \mathcal{A}, \mathcal{P})$ generated by $\Lambda_n$.\\ 
Let $d>1$ and $C_i$ constants independent of $n$ and $t$. Suppose there exist $\{\alpha_n\}_{n \geq1}$ and $\{\beta_n\}_{n \geq 1}$, increasing sequences with
\begin{equation} \label{condc1} n^{1/d}\alpha_n^{-1} \xrightarrow{n\rightarrow +\infty} 0, n^{-1} \alpha_n  \xrightarrow{n\rightarrow +\infty} 0, n^{-1}\beta_n  \xrightarrow{n\rightarrow +\infty} 0  \end{equation}
and
\begin{equation} \label{condc2} E \Big[ \Big( X_n(0) \Big)^d \Big] \leq C_1 \alpha_n^{-d} \qquad \mbox{for all $n$} \end{equation}
Furthermore, let $\{\tau_n\}_{n \geq 1}$ be stopping times such that for $t \in [0, \tau_n]$ and $n \geq 1$,
\begin{equation} \label{condc3} S_n(t) \leq -n \delta X_n(t) + \beta_n C_2 + C_3 \qquad \mbox{with $\delta>0$} \end{equation}
\begin{equation} \label{condc4} \sup_{\omega\in\Omega, y\in\mathcal{Y}, t\leq\tau_n}\vert f_n(t,y) \vert \leq C_4 \alpha_{n}^{-1},\end{equation}
and also assume
\begin{equation}\label{condc5}
\int_{\mathcal{Y}} (f_n(t,y))^2 A_n(t,dy) \leq C_5\,.
\end{equation}
Then, for any $\varepsilon >0$, there exist $C_6>0$ and $n_0$ such that
\begin{equation}\label{tesithmCOEI}
\sup_{n \geq n_0} \mathcal{P} \left\{ \sup_{0 \leq t \leq T \wedge \tau_n} X_n (t) > C_6 (n^{1/d} \alpha_n^{-1} \vee \alpha_n n^{-1})\right\} \leq \varepsilon 
\end{equation}

\end{theorem}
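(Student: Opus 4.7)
The approach is to obtain a moment bound of the right order via an Itô--Lyapunov argument on the power functional $x\mapsto x^d$, and then to upgrade it to a bound on the supremum via a Doob-type maximal inequality plus Markov's inequality.

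First I would apply Itô's formula to $X_n^d$ (positivity of $X_n$ makes the function smooth on the range) and, after localizing to turn the martingale component into a true martingale, take expectations to get
\[
\frac{d}{dt}\E[X_n^d(t)] = d\,\E[X_n^{d-1}(t) S_n(t)] + \E\!\left[\int_{\mathcal{Y}}\bigl\{(X_n(t)+f_n(t,y))^d - X_n^d(t) - dX_n^{d-1}(t) f_n(t,y)\bigr\}A_n(t,dy)\right]
\]
for $t<\tau_n$. A second-order Taylor expansion of the jump increment, combined with the bounds \eqref{condc4} and \eqref{condc5}, controls the jump-corrector by $\tfrac{d(d-1)}{2}C_5\,\E[X_n^{d-2}(t)]$ up to a next-order $O(\alpha_n^{-1})$ remainder; simultaneously \eqref{condc3} yields $\E[X_n^{d-1}(t) S_n(t)] \leq -n\delta\,\E[X_n^d(t)] + (\beta_n C_2 + C_3)\,\E[X_n^{d-1}(t)]$.

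Young's inequality then absorbs the subleading moments $\E[X_n^{d-1}]$ and $\E[X_n^{d-2}]$ into a fraction of the mean-reverting term $-d\delta n\,\E[X_n^d]$, producing a differential inequality
\[
\frac{d}{dt}\E[X_n^d(t)] \leq -\tfrac12 d\delta n\,\E[X_n^d(t)] + K_n, \qquad t \leq \tau_n,
\]
whose forcing constant $K_n$, built from $C_2,C_3,C_5,\beta_n$ and $\alpha_n$, can be tuned so that $K_n/(n\delta) \leq C(n^{1/d}\alpha_n^{-1}\vee \alpha_n n^{-1})^d$. Gronwall's comparison with the stationary solution, together with the initial bound \eqref{condc2}, yields the pointwise estimate $\sup_{t\leq T\wedge\tau_n}\E[X_n^d(t)] \leq C(n^{1/d}\alpha_n^{-1}\vee \alpha_n n^{-1})^d$.

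To pass from this pointwise $L^d$-bound to a bound on the supremum, I would apply Doob's $L^d$-inequality to the purely-discontinuous martingale part of the semimartingale decomposition of $X_n^d$, using \eqref{condc5} to control its predictable quadratic variation and the single-jump bound \eqref{condc4} to handle the largest jump. The resulting estimate is $\E[\sup_{t\leq T\wedge\tau_n}X_n^d(t)] \leq C'(n^{1/d}\alpha_n^{-1}\vee\alpha_n n^{-1})^d$, and Markov's inequality delivers \eqref{tesithmCOEI} with $C_6 = (C'/\varepsilon)^{1/d}$. I expect the main technical obstacle to be the bookkeeping around Young's inequality: the three a priori independent scales $n,\alpha_n,\beta_n$ must be balanced so that both competing regimes $n^{1/d}\alpha_n^{-1}$ (dominating when $\alpha_n$ is small) and $\alpha_n n^{-1}$ (dominating when $\alpha_n$ is large) emerge cleanly from the same estimate, which amounts to splitting the Young constant along the crossover near $\alpha_n\sim n^{(d+1)/(2d)}$.
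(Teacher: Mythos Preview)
The paper does not give its own proof of this statement. Theorem~\ref{collapsing} is listed in the appendix under ``Useful results'' and is explicitly attributed to the Proposition in Appendix~A of Comets--Eisele~\cite{CE88}; it is quoted, not proved. So there is nothing in the paper to compare your proposal against.

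For what it is worth, your outline is the standard Lyapunov route to such collapsing estimates and is in the spirit of the original Comets--Eisele argument: apply It\^o to $X_n^d$, exploit the dissipative drift bound \eqref{condc3}, control the jump corrector via \eqref{condc4}--\eqref{condc5}, close the moment differential inequality with Young, and then pass to the supremum. The one step that deserves more care than you indicate is the last: Doob's inequality controls the martingale part $M_n$ of the semimartingale $X_n^d$, not $X_n^d$ itself, so you must bound the drift integral separately (which is easy, since the drift is $\le K_n$ pointwise after your estimates) and then bound $\E\langle M_n\rangle_T$, which involves $\int_0^T X_n^{2(d-1)}(s)\,ds$. For $1<d\le 2$ this is dominated by the $d$-th moment you already control; for $d>2$ it is a higher moment and an iteration or an a~priori bound would be needed. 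In the paper's own application of this theorem one takes $d=2$, so this point is harmless there.
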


\section{Auxiliary computations}

\begin{lemma} \label{lemma:collapsing}
Let $\a \in (0,1)$ and, for $m>0$ let $y_m(t)$ be the solution of
\[
\begin{split}
\dot{y}_m(t) & = A + B y^{\a}_m(t) - m y_m(t) \\
y_m(0) & = C,
\end{split}
\]
where $A,B,C$ are positive constants. Then for every $t>0$
\[
\limsup_{m \to \infty} \,m\,y_m(t) <+\infty.
\]
\end{lemma}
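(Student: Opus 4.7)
The goal is to show that $y_m(t)$ decays to its equilibrium, which is of order $1/m$. To see the order, note that the fixed point $y^*$ satisfies $my^* = A + B\sqrt{y^*}$, so $y^* \sim A/m$ as $m \to \infty$, because $B\sqrt{A/m} = O(m^{-1/2})$ is negligible compared to $A$. Since the starting value $y_m(0) = C$ is a constant, we need to control both the exponential relaxation toward equilibrium and the size of the equilibrium.

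The plan is to linearize the differential inequality by absorbing the square-root term into the linear one via Young's inequality. Specifically, for any $y \geq 0$,
\[
B\sqrt{y} \;\leq\; \frac{m}{2}\,y \;+\; \frac{B^2}{2m},
\]
which is just the inequality $2\sqrt{uv} \leq u+v$ applied with $u = my/2$ and $v = B^2/(2m)$. Substituting this into the ODE yields the differential inequality
\[
\dot{y}_m(t) \;\leq\; K_m \;-\; \frac{m}{2}\,y_m(t), \qquad K_m := A + \frac{B^2}{2m}.
\]

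Next, I would invoke the classical comparison principle with the linear ODE $\dot{z} = K_m - (m/2)z$, $z(0) = C$, whose solution is
\[
z(t) \;=\; \frac{2K_m}{m} \;+\; \Bigl(C - \tfrac{2K_m}{m}\Bigr)\,e^{-mt/2}.
\]
This gives, for every $t \geq 0$,
\[
y_m(t) \;\leq\; \frac{2K_m}{m} \;+\; C\,e^{-mt/2},
\]
and therefore
\[
m\, y_m(t) \;\leq\; 2A + \frac{B^2}{m} + mC\,e^{-mt/2}.
\]
Fix $t > 0$: the second term vanishes as $m \to +\infty$, while $mC e^{-mt/2} \to 0$ since the exponential dominates; consequently $\limsup_{m \to \infty} m\, y_m(t) \leq 2A < +\infty$, concluding the proof.

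There is no real obstacle here: the only step requiring a small trick is the Young-type bound that makes the nonlinear term harmless, and once that is in hand the estimate reduces to an elementary linear ODE computation. The bound is not sharp (the true limit is $A$, not $2A$), but sharpness is irrelevant for the stated $\limsup$ claim used in the derivation of \eqref{collZ}.
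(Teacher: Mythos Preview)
Your proof is correct and follows essentially the same strategy as the paper: bound the square-root term so that the ODE is dominated by a linear equation with relaxation rate $m/2$, then read off the $O(1/m)$ bound. The only difference is tactical: the paper identifies a threshold $\bar Y_m = 4(B^2/m^2 + A/m)$ above which $A + B\sqrt{z} - mz < -(m/2)z$, and then argues in two stages (fast decay down to $\bar Y_m$, then trapped below it), whereas you apply Young's inequality $B\sqrt{y} \le (m/2)y + B^2/(2m)$ globally and compare directly with the inhomogeneous linear ODE. Your route is slightly cleaner, avoiding the case split and giving a bound valid for all $t\ge 0$ in one line; the paper's version makes the ``relaxation-to-equilibrium'' picture a bit more explicit. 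Both yield the same qualitative conclusion with the same exponential rate.
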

\begin{proof}
By comparison principle with the solution of the linear ODE (i.e. $B=0$) one can see that $y_{m}(t)> 0$. Consider the function $\varphi(y) := A + By^{\a} - \frac{m}{2}y$. Note that $\varphi'(y) <0$ for $y>y_m^* := \left(\frac{2\a B}{m}\right)^{\frac{1}{1-\a}}$.
Set now $Y_m := \frac{2(A+B)}{m}$. Note that for $m$ sufficiently large, $Y_m > y_m^*$  and $\varphi(Y_m) \leq 0$, so that $\varphi(y) < 0$ for all $y \geq Y_m$. Let $\tau_{m}=\inf_{t>0}\{y_{m}(t)\leq {Y}_{m}\}$. For $t< \tau_m$
\[
\dot{y}_m(t) = A + B y^{\a}_m(t) - m y_m(t) = \varphi(y_m(t)) - \frac{m}{2} y_m(t) \leq - \frac{m}{2} y_m(t),
\]
so
\begin{equation} \label{esty1}
y_m(t) \leq Ce^{-\frac{m}{2}t}.
\end{equation}
On the other hand, $y_m(t) \leq Y_m$ for $t \geq \tau_m$,  because $\dot y_m(t)\leq 0$ for $t\geq \tau_m$ . Summing up, for all $t>0$ and $m$ sufficiently large,
\[
y_m(t) \leq \frac{2(A+B)}{m} + Ce^{-\frac{m}{2}t},
\]
from which the conclusion follows.

\end{proof}

\begin{theorem}\label{thm:eq.quad}
The stochastic equation
\begin{equation}\label{eq:quad}
dX_{t}=\sqrt{2a X_{t}}  dW_{t} + (-b X^{2}_{t}+c) dt, \quad X_{0}=x_{0}>0,
\end{equation}
with $a,b,c>0$ admits a weak solution 
for which pathwise uniqueness holds. If $0<c<a$, the point $0$ is attainable and the process is reflected upwards at $0$, so that $X$ spends $0$ time at $0$. If $c\geq a$, the point $0$ is not attainable and the process is always strictly positive.

Moreover, the bi-dimensional equation
\begin{equation}\label{eq:quad.bi}
\begin{split}
dX_{t}&=\sqrt{2a X_{t}}  dB_{t} + (-b X^{2}_{t}+c) dt, \quad X_{0}=x_{0}>0, \\
d\xi_{t}&= w dt +  v \sqrt{X_{t}} dW_{t}, \quad \xi_{0}=\bar\xi_{0}\in \R
\end{split}
\end{equation}
with $a,b,c,v>0$, $w\in \R$, $\langle B,W\rangle=\rho \in [-1,1]$ also admits a weak solution, for which pathwise uniqueness holds
\end{theorem}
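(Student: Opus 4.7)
The plan is to obtain weak existence together with pathwise uniqueness via Yamada--Watanabe's theorem applied to a sequence of truncated equations, then classify the boundary at $0$ by Feller's test, and finally lift to the bi-dimensional system.

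\textbf{Existence and pathwise uniqueness of $X$.} First I would truncate the drift by setting $\mu_n(x) := -b(x\wedge n)^2 + c$, which is globally Lipschitz on $\R$. The diffusion coefficient $x\mapsto \sqrt{2a x^+}$ is $\frac12$-H\"older with modulus $\rho(u)=\sqrt{2au}$, for which $\int_{0^+} \rho(u)^{-2}du = +\infty$. Hence the standard Yamada--Watanabe criterion applies to $dX^n_t = \sqrt{2a(X^n_t)^+}\,dW_t + \mu_n(X^n_t)\,dt$, yielding pathwise uniqueness. Combined with weak existence (standard, e.g. by Skorokhod approximation of the Lipschitz-drift SDE), this gives a strong solution $X^n$. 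Nonnegativity of $X^n$ follows by comparison with the one-dimensional CIR process, using that $\mu_n(0)=c>0$. To remove the truncation, set $\tau_n := \inf\{t\geq 0 : X^n_t \geq n\}$; pathwise uniqueness implies $X^n \equiv X^m$ on $[0,\tau_n\wedge\tau_m]$ for $n\leq m$. The key a priori bound needed to send $n\to\infty$ is that $\sup_n \tau_n = +\infty$ almost surely, which I would obtain from a Lyapunov-function argument with $V(x)=\log(1+x)$ or with $V(x)=x$: the superlinear mean-reverting term $-bx^2$ dominates all other contributions for $x$ large, so that $E[V(X^n_{t\wedge\tau_n})]$ is uniformly bounded in $n$. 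This yields a global strong (hence also weak) solution $X$, and transfers pathwise uniqueness to the original equation since the drift is locally Lipschitz.

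\textbf{Boundary classification at $0$.} I would apply Feller's test for explosion/attainability to the solution $X$, computing the scale density
\[
s(x) = \exp\!\left(-\int_{x_0}^x \frac{-bz^2+c}{az}\,dz\right) = \left(\tfrac{x}{x_0}\right)^{-c/a}\exp\!\left(\tfrac{b(x^2-x_0^2)}{2a}\right).
\]
Near $0$, $s(x)\sim C\, x^{-c/a}$; the Feller integrals $\int_{0^+} s(x)\,dx$ and $\int_{0^+} s(x)m(dx)$ (with $m$ the speed measure $dx/(a x s(x))$) converge precisely when $c<a$, so that $0$ is attainable iff $c<a$. When $c<a$, since $\sigma^2(0)=0$ and $\mu(0)=c>0$, the occupation-time formula applied to $\int_0^t \mathbf{1}_{\{X_s=0\}}ds$ gives zero (the local-time contribution at $0$ is pushed back instantaneously by the strictly positive drift), so $X$ is reflected upwards at $0$ in the sense stated.

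\textbf{The bi-dimensional equation.} The SDE for $X$ in \eqref{eq:quad.bi} is driven only by $B$ and coincides with the one-dimensional equation above; existence and pathwise uniqueness for $X$ therefore carry over verbatim. Given the continuous nonnegative sample path of $X$, the process $\sqrt{X}$ is adapted and locally bounded in $L^2$ over finite intervals (using the Lyapunov bound above), so the Wiener integral $\int_0^t \sqrt{X_s}\,dW_s$ is well defined, and $\xi_t = \bar\xi_0 + wt + v\int_0^t \sqrt{X_s}\,dW_s$ is the unique solution of the second equation pathwise in $W$. The prescribed correlation $\langle B,W\rangle = \rho\in[-1,1]$ is realized by writing $W = \rho B + \sqrt{1-\rho^2}\, B'$ for an independent Brownian motion $B'$; both existence (weak) and pathwise uniqueness for the pair follow immediately.

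The main obstacle is the a priori bound ensuring $\tau_n\to\infty$ as $n\to\infty$, since the drift is only locally Lipschitz and super-linear. This is resolved by exploiting the strong damping provided by the $-bX^2$ term via an appropriate Lyapunov function; the remaining steps are standard applications of Yamada--Watanabe and Feller's classification.
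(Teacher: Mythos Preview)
Your proof is correct and reaches the same conclusions, but the organization differs from the paper's. The paper invokes the Ikeda--Watanabe one-dimensional diffusion machinery directly: since the coefficients are $C^1$ and $\sigma^2>0$ on $(0,\infty)$, existence and pathwise uniqueness hold up to $T_0\wedge T_\infty$; then the scale function $S$ is computed, giving $S(+\infty)=+\infty$ (no explosion) and $S(0)=-\infty$ iff $c\geq a$ (boundary classification), with pathwise uniqueness across $0$ obtained by adapting IW Theorem~3.2 using $-bX^2<0$. You instead separate the ingredients: truncation plus Yamada--Watanabe for existence and uniqueness, a Lyapunov bound for non-explosion, and then Feller's test for the boundary. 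Both routes lead to the same scale-density asymptotics $s(x)\sim Cx^{-c/a}$ near $0$ and hence the same dichotomy. Your argument is more explicit and self-contained; the paper's is shorter because the IW framework bundles existence, non-explosion and the boundary test into one computation. The bi-dimensional extension is essentially identical in both proofs (define $\xi$ as a stochastic integral and check $\sqrt{X_\cdot}\in L^2$; the paper does this via comparison with a CIR process, you via your Lyapunov estimate). One small slip: the second Feller integral you quote, $\int_{0^+} s(x)\,m(dx)$, equals $\int_{0^+}(ax)^{-1}dx$ and always diverges; the relevant quantity for attainability is rather $\int_{0^+}(S(y)-S(0))\,m(dy)$, which is indeed finite when $c<a$. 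This does not affect your conclusion.
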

\begin{proof}
Let us write $\sigma^{2}(x)=2a x$ and $\mu(x)=-b x^{2}+c$.
Following \cite{IW.book}[p.446, Chapter 3. Some results on one-dimensional diffusion processes], we prove that there exist a (weak) unique (pathwise) solution to \eqref{eq:quad}. The diffusion and drift coefficients are of class $\mathcal{C}^1$ on $(0, +\infty)$, the diffusion coefficient is positive on $(0, +\infty)$. So Equation  \eqref{eq:quad} is uniquely defined until the stopping time $T_{e} = T_{0}\wedge T_{\infty}$ where $ T_{0}$ is the hitting time of $0$ and  $T_{\infty}$ the explosion time. Let us now introduce, relative to \eqref{eq:quad},
 \emph{speed measure} $M$
with density $m$ and \emph{scale function} $S$ given by
\begin{equation}
    \label{eq:speend-n-scale}
    m(x)=\frac{2}{\sigma^2(x) s(x)}
\text{ and }
S(x)=\int_1^x s(y) d y,
\end{equation}
with $s(y)= \exp( -\int_1^x 2\mu(y)/\sigma^2(y)d y ) $. We have that $S(+\infty)=+\infty$ for any parameters choice, so the process does not explode, and $S(0)=-\infty$ if $c\geq a$, so in this case  
\cite{IW.book}[Theorem 3.1] tells us that $P(T_{e}=+\infty)=1$, so there exist a (weak) unique (pathwise) solution to \eqref{eq:quad} and with probability one the process remains strictly positive for all times $t$ (the boundary is not attainable).
If $0<c<a$, we have $S(0)\in \R$ and the process hits $0$, i.e. $P(T_{0}<+\infty)=1$. We have $dX_{t}=c dt$, with $c>0$, when $X_{t}=0$, so the process is reflected at $0$ and the explosion time is still infinity, so existence still holds. Pathwise uniqueness can be shown adapting \cite{IW.book}[Chapter IV - Theorem 3.2], using the fact that $-bX^{2}<0$. 
  To obtain a solution to the bi-dimensional equation, the second component must be given by
  \[
\xi_{t}= \xi_{0}  +w t +  v \int_{0}^{t}\sqrt{X_{s}} dW_{s}
  \]
  To have weak existence and pathwise uniqueness, we check that
  \[
E  \int_{0}^{t}(\sqrt{X_{s}})^{2} ds = \int_{0}^{t} E[ X_{s}]ds <+\infty
  \]
  by comparison with a suitable CIR process (or computing it using speed measure and scale function). Therefore the statement holds true.
  \end{proof}

\bibliographystyle{plain}
\bibliography{biblio}

\end{document}